\newtheorem{proposition}{Proposition}
\newtheorem{theorem}{Theorem}
\newtheorem{corollary}{Corollary}
\newcommand{\act}[1]{\xlongrightarrow{#1}}          % transition of with action type [1]
\newcommand{\calH}{\mathcal{H}}
\newcommand{\calR}{\mathcal{R}}
\newcommand{\hE}{\hat{E}}
\newcommand{\hA}{\hat{A}}
\newcommand{\hx}{\hat{x}}
\newcommand{\hb}{\hat{b}}
\newcommand{\norm}[1]{\lVert #1 \rVert}
\newcommand{\xe}{\mathrm{e}}            		% Exponential
\newcommand{\xm}{x^-}
\newcommand{\xp}{x^+}
\newcommand{\zm}{z^-}
\newcommand{\zp}{z^+}
\newcommand{\im}{i^-}
\newcommand{\ip}{i^+}
\newcommand{\Vin}{\ensuremath{v_\mathit{in}}}
\newcommand{\Vout}{\ensuremath{v_\mathit{out}}}
\newcommand{\Vinm}{\Vin^-}
\newcommand{\Vinp}{\Vin^+}
\newcommand{\Voutm}{\Vout^-}
\newcommand{\Voutp}{\Vout^+}
\newcommand{\RE}{\mathbb{R}}
\newcommand{\calD}{\mathcal{D}}
\newcommand{\rI}{%
  \textup{\uppercase\expandafter{\romannumeral 1}}%
}
\newcommand{\rII}{%
  \textup{\uppercase\expandafter{\romannumeral 2}}%
}
\newcommand{\dert}{\ensuremath{\partial_t}}
\begin{document}

\title{From Electric Circuits to Chemical Networks}

\author[1]{Luca Cardelli}
\author[2]{Mirco Tribastone}
\author[3]{Max Tschaikowski}
\affil[1]{University of Oxford and Microsoft Research}
\affil[2]{IMT School for Advanced Studies Lucca}
\affil[3]{Technical University of Vienna}

\maketitle

\begin{abstract}
Electric circuits manipulate electric charge and magnetic flux via a small set of discrete components to implement useful functionality over continuous time-varying signals represented by currents and voltages. Much of the same functionality is useful to biological organisms, where it is implemented by a completely different set of discrete components (typically proteins) and signal representations (typically via concentrations). We describe how to take a linear electric circuit and systematically convert it to a chemical reaction network of the same functionality, as a dynamical system. Both the structure and the components of the electric circuit are dissolved in the process, but the resulting chemical network is intelligible. This approach provides access to a large library of well-studied devices, from analog electronics, whose chemical network realization can be compared to natural biochemical networks, or used to engineer synthetic biochemical networks.
\end{abstract}

%%%%%%%%%%%%%%%%%%%%%%%%%%%%%%%%%%%
\section{Introduction}

Living organisms perform a variety of functions that can be described in abstract terms as information processing and regulation. Analogies have been drawn between the biochemical reaction networks that perform such functions and electric circuits of similar nature~\cite{Arkin2000,DelVecchio2008}. The comparison is useful in systems biology, in trying to understand the function of natural systems, and in synthetic biology, in trying to engineer desired functionality in a biochemical context.

An obstacle to exploiting this analogy is that the fundamental components of biochemistry and electronics are very different: the phenomena of resistance, induction, and capacitance based on the interplay between electric and magnetic fields have no immediate parallel in terms of chemical concentrations. Hence, it is not clear how to take systematic advantage of engineering knowledge developed in electronics in understanding biochemical systems. Instead, the search for \emph{components} of biological network has progressed in different and certainly more appropriate directions~\cite{Hart8346,Milo824}.

Even within electronics, though, the precise nature of electric components is incidental to the desired function. For example, one may wish to filter the high frequencies of a signal represented by an oscillating voltage. There are countless electric circuits that can perform this function, based on different classes of components in different configurations. Some of those circuits are based on operational amplifiers, which are themselves built from a large network of components to perform an abstract function to which they are incidental. The nature of the fundamental components is inessential, as long as they can be combined to provide a wide range of functionality.

A common way to describe essential function, both in electronics and in biochemistry, is through a system of ordinary differential equations (ODEs). Once the function of a circuit is reduced to this form, it does not matter if the quantities represented are voltages or concentrations: it only matters the way in which they vary over time. Conversely, given an ODE system, one may ask the engineering question of how to realize a circuit (electronic or biochemical) that can perform that function. An early example of this inverse process is the synthesis of mechanical and then electric analog circuits from differential equations~\cite{BUSH1931447,doi:10.1002/sapm1941201337}.

Certain classes of polynomial ODE systems can be systematically turned into chemical reaction networks (CRNs) that obey the same kinetics~\cite{Hungarian}. Further, CRNs can be compiled systematically into a collection of molecules that can be engineered to obey the kinetics of those reactions~\cite{Soloveichik5393}. In this paper we wish to go one step further on the front end of this process. Taking advantage of the large libraries of known electric circuits, we wish to take an arbitrary (but linear, for now) electric circuit and show how to turn it into a set of molecules that obey the kinetics of the quantities in that circuit.

The first obstacle we need to confront is that even common electric circuits describe behaviors that go beyond ODEs. Algorithmic approaches for the analysis of linear circuits such as Modified Nodal Analysis produce, in general, systems of differential algebraic equations (DAEs), where the algebraic equations are induced by classical node analysis based on Kirchhoff laws~\cite{ho1975modified}. Hence we first reduce DAEs to ODEs, after which we can apply some further techniques. The second obstacle is to take an ODE that may be about voltages and currents, and turn it into a form that can be interpreted chemically. This means that each variable should only take nonnegative quantities (for concentrations), and that appropriate chemical reactions should be derived about those quantities.

This approach has a dual purpose. From a systems biology point of view, we may want to compare the CRNs derived from electric circuits to the ones occurring in nature. This might help elucidate the function of natural networks. Or, at least, it will provide a spectrum of possible chemical networks of known function, whose structure may not be obvious, therefore broadening our expectations of what is possible chemically. From a systems biology point of view, we take it as given that an abstract CRN can be turned into a collection of molecules. In fact, multiple target architectures are possible, from short oligonucleotides in solution~\cite{Soloveichik5393} to gene networks~\cite{Genelets}. The ability to generate molecular configurations from a vast existing library of (electric, or other) circuits is appealing for systematizing the generations of synthetic organisms. In extending the known techniques from ODEs to DAEs, we extend the scope of potential libraries we can draw from.

\paragraph*{Contributions} Our main result is a systematic technique that transforms linear DAE systems into CRNs. Our technique is, to the best of our knowledge, novel. DAE systems are either solved symbolically by relying on index reduction~\cite{doi:10.1137/0909014}, or numerically by relying on numerical methods that compute the trajectories for a given initial condition~\cite{KunkelMehrmann}. In contrast to our approach, index reduction introduces additional derivatives of signals that may not be available to the circuit, while numerical methods do not transform the DAEs into ODEs, which seems necessary for transforming DAE systems into CRNs. We analyze, as an example, an electric high pass filter and provide a chemical reaction network for it, whose function and architecture can be independently interpreted in a biological context.

\section{Outline of Methods}\label{sec_outline}

We start from a linear electric circuit composed of resistors, capacitors and inductors, with variables ranging over voltages and currents, and we systematically derive an equivalent CRN where chemical species (or more precisely their differences) approximate the trajectories of the original variables.

The basis for this process is the so-called Hungarian Lemma~\cite{Hungarian}, which provides a method for converting certain polynomial ODEs into CRNs by converting each monomial on the right hand side of a differential equation into a separate chemical reaction. Polynomial ODEs can represent, exactly, a much broader class of ODEs, including fractional, trigonometric, and exponential terms~\cite{10.1007/978-3-319-19249-9_23}, thus covering a broad range of chemical behavior including Hill kinetics~\cite{Cardelli2014}.

The Hungarian Lemma, however, has specific requirements. First, the concentrations of the chemical species must be nonnegative, while ordinary ODE variables, and in particular voltages and currents, may be negative. Second, if a monomial has a negative sign, then the differential variable on the left-hand side of the equation must appear as a factor in the monomial. This means, for example, that the ODE $\dert x = y$ (where the growth rate of $x$ is given by the concentration of $y$, with $\dert x$ denoting the time derivative of $x$) can be reduced to the reaction $y \rightarrow x + y$. And the ODE $\dert x = -xy$ can be reduced to the reaction $x+y \rightarrow y$. But the ODE $\dert x = -y$ (where the decrease in rate of $x$ is given by the concentration of $y$) cannot be reduced: for $x$ to decrease it must appear on the left-hand-side of a chemical reaction, which implies it should appear in a monomial for $\dert x$ by the law of mass action. An ODE system with no such forbidden negative monomial is said to be \emph{Hungarian}, and any Hungarian ODE system can be reduced to a CRN (although not uniquely) whose mass action kinetics reproduces the original ODE. We use a technique to reduce a polynomial non-Hungarian ODE to an Hungarian one in twice as many variables, thus allowing us to produce CRNs also for non-Hungarian ODEs. In the same step, we make all trajectories nonnegative so that they can be realized by chemical species.

Some simple electric circuits yield ODE systems that can be converted to CRNs as outlined. Pure resistor circuits yield simple algebraic equations. But more complex circuits yield general DAEs~\cite{ho1975modified}, which we must be prepared to handle. Our main technique applies to linear DAE systems of the form
\begin{equation}\label{eq:dae}
E \dert x = A x + Bu
\end{equation}
Here, $x \in \RE^n$ is the (column) vector of dependent variables, $E \in \RE^{n \times n}$ produces a linear combination of their derivatives, $A \in \RE^{n \times n}$ produces a linear combination of the variables, and the term $B \in \RE^{n \times m}$ is the input matrix, and $u \in \RE^m$ is the vector of inputs, such as voltage or current sources.

In general, inputs are assumed to be arbitrary, known functions of time. However, for the purposes of converting electric circuits into CRNs, it is necessary to appropriately encode also the inputs as chemical species. In this paper, we will assume that the input vector $u$ can be itself described as the solution of a system of equations. Specifically, we assume that it satisfies an affine ODE system of the form
\begin{equation}\label{eq:input}
\begin{pmatrix}
\dert u \\
\dert z
\end{pmatrix} = D \cdot
\begin{pmatrix}
u \\
z
\end{pmatrix} + d
\end{equation}
for some matrix $D \in \RE^{(m+k) \times (m+k)}$ and $\left( u(0),z(0) \right)^T, d \in \RE^{m+k}$. Intuitively, the input $u$ in (\ref{eq:dae}) is part of an ODE solution which may depend on auxiliary ODE variables $z$ that do not appear in the DAE. This is a rather general setting that allows us to  encode arbitrary time-varying inputs by approximating them with Fourier series, which can be expressed as solutions of linear ODEs (see Section~\ref{sec:example}).
%\comMax{Since differentiable functions can be approximated by Forier series, this allows one to capture arbitrary inputs. In particular, the input constraint~(\ref{eq:input}) is without loss of generality.}

Our technique converts the overall system (\ref{eq:dae})-(\ref{eq:input}) into an ODE system over the same variables, up to a controllable approximation. That ODE system can then be transformed into a CRN as discussed above.

\begin{figure}
\begin{center}
\includegraphics[scale=0.60]{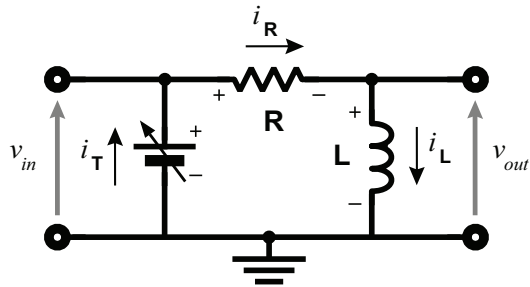}
\caption{\small \sl A high pass filter, with input voltage $v_{in}$ and output voltage $v_{out}$ with respect to ground.
         \label{fig:Figure1}}
\end{center}
\end{figure}

We now describe in detail the entire process of converting linear electric circuits to CRNs, through a small textbook example involving a single differential equation and a single algebraic equation for the well-known $RL$ (resistor-inductor) circuit in Figure \ref{fig:Figure1}~\cite{electric:circuit}. The analysis of the circuit proceeds as follows.
We let $\Vin$ denote the input voltage (measured with respect to the ground node); the output is the voltage $\Vout$ across the inductor.
By standard node analysis, using Kirchhoff's current law at each node, we obtain that the three currents are equal, so $i \triangleq i_T = i_R = i_L$. Faraday's law for the inductor $L$, and Ohm's law for the resistor $R$, then give us:
\begin{subequations}
    \begin{align}
    \dert i & = \Vout/L 		\label{DAE_1a} \\
    iR & = \Vin - \Vout 			\label{DAE_1b}
    \end{align}
\end{subequations}
This is a DAE, where (\ref{DAE_1a}) is a differential equation, and (\ref{DAE_1b}) is an algebraic equation. To find its solution, we can replace $\Vout = \Vin - iR$ from (\ref{DAE_1b}) in (\ref{DAE_1a}), obtaining (\ref{DAE_2a}), which can be integrated to obtain $i(t)$. From that solution we can then obtain $\Vout(t)$ via (\ref{DAE_2b}).
\begin{subequations}
    \begin{align}
    \dert i & = \Vin/L - iR/L		\label{DAE_2a} \\
    \Vout & = \Vin - iR				\label{DAE_2b}
    \end{align}
\end{subequations}

If we briefly assume that $\Vin$ is a non-negative input source, then (\ref{DAE_2a}) can be converted to a mass action CRN as follows (using chemical species with the same names as our variables):
    \begin{align*}
	\Vin  & \act{1/L}  i + \Vin &
	i & \act{R/L} \emptyset
    \end{align*}
where with the symbol $\emptyset$ we have denoted the empty set of products. These two chemical reactions yield (\ref{DAE_2a}) for the evolution of the concentration of species $i$. However, a voltage $\Vin$ may be negative, which cannot be modeled with chemical concentrations. Additionally, we are interested in the output $\Vout$, not $i$, and therefore we need to find a way to realize chemically equation (\ref{DAE_2b}) as well.

Because of those difficulties, we cannot make much progress without a more general technique to implement DAEs as CRNs. Our technique involves an approximation, like the one that seems necessary just for algebraic equations, but it can be used in the general case of linear DAEs. We now illustrate it by applying it to the example of Figure \ref{fig:Figure1}. We first rearrange our DAE as (\ref{DAE_3a},\ref{DAE_3b}). Setting  $x=(i, \Vout)^T$ we have:
\begin{subequations}
	\begin{align}
    \dert i & = \Vout/L				\label{DAE_3a} \\
    0 & = i + \Vout/R - \Vin/R			\label{DAE_3b}
    \end{align}
\end{subequations}
which can be arranged into the form (\ref{eq:dae}):
\begin{align*}
	\underbrace{
%	Removing the left-right brackets because it does not fit in the line; the underbraces should be enough
    \begin{pmatrix}
      1 & 0 \\
      0 & 0
    \end{pmatrix}
    }_{E}
    \underbrace{
    \begin{pmatrix}
      \dert i \\
      \dert \Vout
    \end{pmatrix}
	}_{\dert x}
    	& =
    \underbrace{
    \begin{pmatrix}
      0 & 1/L \\
      1 & 1/R
    \end{pmatrix}
	}_{A}
    \underbrace{
    \begin{pmatrix}
      i \\
      \Vout
    \end{pmatrix}
	}_{x}
    	+
    \underbrace{
    \begin{pmatrix}
      0 \\
      -\Vin/R
    \end{pmatrix}
	}_{b}
    \label{DAE_3c}
\end{align*}
where we have fixed the constant vector $b := Bu$ under the assumption of a constant input source $u := \Vin$.

We approximate the DAE, for a parameter $h>0$, by symbolically computing $F_h(x) = (E-hA)^{-1}(Ax+b)$, which is related to the numerical backward Euler method. Taking $R = L = 1$ for simplicity, we obtain:
    \begin{align}
    F_h
    \begin{pmatrix}
      i \\
      \Vout
    \end{pmatrix}
	    =
    \begin{pmatrix}
      \dfrac{\Vin-i}{1+h} \\
      \dfrac{\Vin-i-(1+h)\Vout}{h(1+h)}
    \end{pmatrix}
    \end{align}

We next use $F_h(x)$ as the right-hand side of a new ODE system $\dert x = F_h(x)$, which is such that for $h \rightarrow 0$ the solution of the ODE system (\ref{DAE_4a})-(\ref{DAE_4b}) converges to the solution of the original DAE system (\ref{DAE_3a})-(\ref{DAE_3b}) (see Theorem~\ref{thm_main}).
\begin{subequations}
    \begin{align}
        \dert i & = \Vin/(1+h) - i/(1+h)			    	\label{DAE_4a} \\
        \dert \Vout & = \Vin/(h+h^2) - i/(h+h^2) - \Vout/h	\label{DAE_4b}
    \end{align}
 \end{subequations}
Indeed, we can easily see that for $h \rightarrow 0$, (\ref{DAE_4a}) converges exactly to the ODE (\ref{DAE_2a}). As for (\ref{DAE_4b}), this is now a differential equation approximating equation (\ref{DAE_3b}) for $h \rightarrow 0$, where we notice that a small value of $h$ makes (\ref{DAE_4b}) evolve much faster than (\ref{DAE_4a}).

% N.B.: without simplifying R = L = 1, maybe not so bad, but there is an extra (R-L)vB term, and for h->0 we do not get exactly (2.a)

We have reduced a DAE to an ODE, but (\ref{DAE_4a})-(\ref{DAE_4b}) is not Hungarian because of the $-i$ monomial in (\ref{DAE_4b}). Keeping in mind that we need to deal eventually with non-Hungarian ODEs, we now apply a \emph{positivation} technique in the style of Oishi and Klavins~\cite{5979221}, where each variable is represented as the difference of two \emph{non-negative} variables:
    \begin{align*}
      i & = \ip - \im	& 	\Vin & = \Vinp - \Vinm	& \Vout  & = \Voutp - \Voutm
    \end{align*}

Let us now abbreviate $p=1/(1+h)$, $q=1/(h+h^2)$, and $r=1/h$, and consider the ODE system where we separate the positive and negative monomials of each ODE in (\ref{DAE_4a})-(\ref{DAE_4b}) into two ODEs:
\begin{subequations}
    \begin{align}
      \dert \ip  & =   p\Vinp + p\im & \dert \im & =   p\Vinm + p\ip			\label{DAE_5a} \\
      \dert \Voutp & = q\Vinp + q\im + r\Voutm & \dert \Voutm & = q\Vinm + q\ip + r\Voutp	\label{DAE_5b}
    \end{align}
 \end{subequations}

The initial conditions for this new system must satisfy $\ip_0 - \im_0 = i_0$ with $\ip_0,\im_0\geq 0$, etc. Since differentiation is a linear operator, the solutions of (\ref{DAE_4a})-(\ref{DAE_4b}) can be recovered as differences from the solutions of (\ref{DAE_5a})-(\ref{DAE_5b}): $\dert \ip - \dert \im = \dert i$ and $\dert \Voutp - \dert \Voutm = \dert \Vout$. Although the goal was to make all variables non-negative, we now also have a Hungarian ODE system because all the monomials in (\ref{DAE_5a})-(\ref{DAE_5b}) are positive. Hence there is no further difficulty in converting these ODEs to mass action reactions, obtaining the following linear CRN with one reaction for each monomial in (\ref{DAE_5a})-(\ref{DAE_5b}), and with the parameter $h$ appearing in the reaction rates:
 \begin{align}
\Vinp & \act{p} \Vinp + \ip		& \im & \act{p} \im + \ip		\nonumber \\
\Vinm & \act{p} \Vinm + \im		& \ip & \act{p} \ip + \im		 \nonumber \\
\Vinp & \act{q} \Vinp + \Voutp	& \im & \act{q} \im + \Voutp	\label{eq_hungarization_example}  \\
\Vinm & \act{q} \Vinm + \Voutm	& \ip & \act{q} \ip + \Voutm  \nonumber \\
\Voutm & \act{r} \Voutm + \Voutp	& \Voutp & \act{r} \Voutp + \Voutm \nonumber
    \end{align}
Here the input $\Vin^\pm$ always acts as a simple catalyst. We note that the CRN implementation does not depend on the actual value of $\Vin$, which only affects the initial condition of the chemical species that represent $\Vin^{\pm}$. This decoupling between the CRN implementation of the circuit and that of the input sources carries over to the more general case when the sources are time-varying solutions of the ODEs~(\ref{eq:input}), see Theorem~\ref{thm_main_ext_ext} and the subsequent discussion.

Inspecting (\ref{eq_hungarization_example}), the chemical species $\Vout^{\pm}$ and $i^{\pm}$ are involved in autocatalytic cycles. For example both $\ip$ and $\im$ grow exponentially over time, while their difference $i$ remains bounded. It is possible to eliminate such exponential growths by adding non-linear dampening reactions to the otherwise linear CRN:
    \begin{align}\label{eq_correction_example}
	\ip + \im & \act{\gamma} \emptyset &
%    \Vinp + \Vinm & \act{\gamma} \emptyset &            this reaction is the responsibility of the input CRN
    \Voutp + \Voutm & \act{\gamma} \emptyset
    \end{align}
The first reaction, for example, preserves the difference $\ip - \im$, and results in two new identical monomials in the ODEs for $\ip$ and $\im$, that then cancel in $\dert \ip - \dert \im$. Hence that reaction does not change the $i$ solution, but keeps $\ip$ and $\im$ bounded.

The network consisting of reactions (\ref{eq_hungarization_example})-(\ref{eq_correction_example}) is depicted in Figure \ref{fig:Figure2}, where for small $h$ we have $1 \approx p \ll q \approx r$, and we can take $\gamma = r$.
% there is a lnapprox but no lapprox symbol in latex?!
This network has the flavor of an incoherent feedforward motif \cite{Milo824}, considering  parallel pairs $x^\pm \to y^\pm$ as activations and cross pairs $x^\pm \to y^\mp$ as inhibitions, thereby $\Vin$ activates both $i$ and $\Vout$, and $i$ `incoherently' inhibits $\Vout$. Additionally, the motif of mutual catalysis and join degradation around $i^\pm$ makes that pair stabilize to a copy of its input $\Vin^\pm$ (in the sense that at steady state $\ip - \im = \Vinp - \Vinm$) regardless of the value of the rate $p$. This motif is repeated around $\Vout^\pm$.
% Stablization of the "copy" motif:
%
% vAp ->{r} vAp + ip
% in ->{r} in + ip
% vAn ->{r} vAn + in
% ip ->{r} ip + in
% ip + in ->{ann}
%
% dt ip = r(vAp + in) - ip.in
% dt in = r(vAn + ip) - ip.in
%
% dt(ip-in) = r(vAp + in) - r(vAn + ip)
%   = r(vAp - vAn) - r(ip - in)
%
% at steady state
% ip - in = vAp - vAn
%
When the input $\Vin^\pm$ remains constant, $i^\pm$ becomes a copy of $\Vin^\pm$, and $\Vout^\pm$ becomes a copy of the sum of its two opposite inputs, $\Vin^\pm$ and $i^\mp$, and so it converges to a baseline output of $\Voutp - \Voutm \approx 0$. When the input $\Vin$ changes, it affects $\Vout$ quickly and $i$ slowly, with a delayed inhibition of $\Vout$ by $i$. It has been shown that feedforward motifs can behave like high-pass filters \cite{deRonde2012}.

The subnetwork in Figure~\ref{fig:Figure2} consisting of $\Vin^\pm$, $\Vout^\pm$, and the connecting $q$,$r$,$\gamma$ arcs, is in itself also a low-pass filter. It is exactly what is obtained when replacing the inductor with a capacitor of capacitance $C$ in Figure \ref{fig:Figure1}, yielding a well known low-pass filter, and deriving the CRN from it by positivation (with $q=r=1/RC$). The process is simpler in this case, since a single ODE is generated from that circuit, and no approximation via $h \rightarrow 0$ is required.

In summary, we have derived an intelligible chemical reaction network from an electric circuit, and we are guaranteed that it implements the same functionality, as shown in the next section.
\begin{figure}
\begin{center}
\includegraphics[scale=0.70]{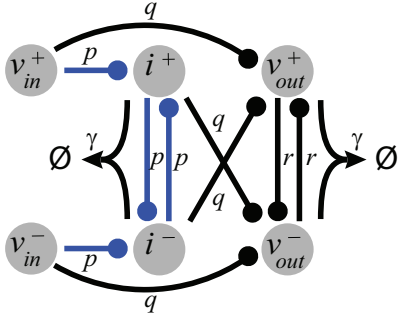}
\caption{\small \sl The CRN for the high-pass filter of Figure \ref{fig:Figure1}, consisting of reactions (\ref{eq_hungarization_example}, \ref{eq_correction_example}). A ball-headed arc from $x$ to $y$ denotes a reaction $x \to x + y$, and a double-tailed arc from $x$ and $y$ denotes a reaction $x + y \to \emptyset$. Black arcs have much faster rates than blue arcs.} \label{fig:Figure2}
\end{center}
\end{figure}

\section{Methods}\label{sec_methods}

%We first discuss in Section~\ref{sec_dae_2_ode} how to convert a linear DAE system into a linear ODE system. Afterwards, we show how to transform the so-obtained ODE system into a CRN.

\subsection{From DAEs to ODEs}\label{sec_dae_2_ode}
We first show how to convert a linear DAE system into a linear ODE system. To this end we start with a DAE system in the form
\begin{equation}\label{eq:constant.dae}
E \dert x  = A x + b, \qquad \text{with~} E, A \in \RE^{n \times n}, b \in \RE^n,
\end{equation}
which corresponds to (\ref{eq:dae}) under the assumption of constant inputs $u$. From now on we focus on \emph{regular} DAE systems, i.e., systems for which there exists no initial condition $x(0) \in \RE^n$ for which they admit more than one solution. We let $\calD$ denote the set of initial conditions for which a regular DAE admits solutions. Elements of $\calD$ are called \emph{consistent initial conditions} (see, e.g.,~\cite[Section 2.1]{KunkelMehrmann} for details).
Obviously, any physically meaningful DAE model has to be regular. In the case of electric circuits, for instance, non-regular DAE systems may arise in the presence of short circuits and other erroneous designs.

If $E$ is invertible, this DAE can be directly recast into a linear ODE system via $$\dert x = E^{-1} A x + E^{-1} b.$$
However, in the case of linear electric circuits $E$ is in general  not invertible. In this case, a transformation of a DAE system into an ODE system requires \emph{index reduction}~\cite{doi:10.1137/0909014,KunkelMehrmann}, which relies on expensive symbolic computations. Our method consists in circumventing this analysis by considering an explicit scheme arising from numerical methods for the solution of DAE systems.

%\begin{proposition}\label{prop_well_posed}
%A DAE system $E \dot{x} = A x + b$ is regular if and only if the matrix $A - h E$ is invertible for at least one $h > 0$. .
%\end{proposition}

A numerical method is an algorithm that generates, for a given small time step $h > 0$ and initial condition $x(0)$, a sequence $(x[i])_i$ such that $x[i] \approx x(i h)$, where $x : [0;\infty) \to \RE^n$ denotes the true solution of (\ref{eq:constant.dae}). Numerical methods are guaranteed to converge to the true solution $x$ when $h$ approaches zero. That is, that for any error threshold $\varepsilon > 0$ and finite time horizon $T > 0$, one can find a sufficiently small time step $h > 0$ such that $\max_{0 \leq i \leq N} \norm{x[i] - x(i h)} \leq \varepsilon$ and $T / h = N \in \mathbb{N}$.

A common numerical method for the solution of DAE systems is the backward Euler method~\cite[Section 5.2]{KunkelMehrmann} which is given by
$$x[i + 1] = x[i] + h F_h(x[i]), \text{with~} F_h(x) := (E - hA)^{-1}(A x + b).$$
The function $F_h$ is well-defined for sufficiently small $h > 0$ because the DAE system is regular~\cite[Section 2.1]{KunkelMehrmann}.

Noting that the ``slope'' of the Euler method at point $x[i]$, $(x[i + 1] - x[i]) / h$, is given by $F_h(x[i])$, we expect that the Euler sequence $(x[i])_{i \geq 0}$ will match the solution of the ODE system $\dot{x}_h = F_h(x_h)$. That is, we expect that $x_h(i h) \approx x[i]$ for all $0 \leq i \leq N$. Then, the convergence of the Euler sequence to the DAE solution $x$ would allow us to conclude that $x_h(i h) \approx x[i] \approx x(i h)$.

The next theorem is our first main result and proves that this is indeed the case.

\begin{theorem}\label{thm_main}
For any $\varepsilon > 0$, $x(0) \in \calD$ and $T > 0$, there exists an $h > 0$ such that $\sup_{0 \leq t \leq T} \norm{x(t) - x_h(t)} \leq \varepsilon$, where $\dert x_h = F_h(x_h)$ and $x_h(0) = x(0)$.
\end{theorem}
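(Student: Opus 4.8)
The plan is to establish the claim in two stages that mirror the informal discussion preceding the theorem: first relate the continuous ODE solution $x_h$ to the discrete backward Euler sequence $(x[i])$, and then relate that sequence to the true DAE solution $x$. The triangle inequality
$$
\sup_{0 \le t \le T} \norm{x(t) - x_h(t)} \le \max_{0 \le i \le N} \norm{x(ih) - x[i]} + \max_{0 \le i \le N}\norm{x[i] - x_h(ih)} + (\text{interpolation error})
$$
then lets me control each piece separately by choosing $h$ small. The second term is the heart of the matter and the first is quotable from the standard convergence theory of the backward Euler method on regular DAEs, cited in the excerpt as~\cite{KunkelMehrmann}. A subtlety I must respect throughout is that $F_h$ itself depends on $h$, so the vector field defining $x_h$ is changing as $h \to 0$; this is not a single fixed ODE whose Euler discretization I am analyzing, and that coupling is what I expect to be the main obstacle.

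First I would make the connection between $x_h$ and the Euler sequence precise. By definition $x[i+1] = x[i] + h F_h(x[i])$, while $x_h$ satisfies $\dert x_h = F_h(x_h)$ with $x_h(0) = x(0) = x[0]$. Reading the recurrence the other way, the sequence $(x[i])$ is exactly the \emph{explicit} (forward) Euler discretization, with step $h$, of the autonomous ODE $\dert x_h = F_h(x_h)$. Since $F_h$ is affine (indeed $F_h(x) = (E-hA)^{-1}(Ax+b)$ is linear in $x$ plus a constant), it is globally Lipschitz; the issue is that its Lipschitz constant $L_h = \norm{(E-hA)^{-1}A}$ may blow up as $h \to 0$, precisely because $E$ is singular. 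So the naive forward-Euler error bound, whose constant grows like $e^{L_h T}$, is useless here, and I cannot simply invoke a textbook one-step-method estimate with $h$-independent constants.

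The key technical step, therefore, is to solve the affine system $\dert x_h = F_h(x_h)$ in closed form and compare the two discretizations directly. Writing $M_h := (E-hA)^{-1}A$ and $c_h := (E-hA)^{-1}b$, the continuous solution is $x_h(t) = e^{M_h t} x(0) + \int_0^t e^{M_h(t-s)} c_h\, ds$, while the Euler iterate is $x[i] = (I + hM_h)^i x(0) + h\sum_{j=0}^{i-1}(I+hM_h)^j c_h$. Evaluating the continuous solution at $t = ih$ reduces the comparison to the spectral gap between $e^{hM_h}$ and $I + hM_h$. The crucial observation is that $hM_h = h(E-hA)^{-1}A = (E-hA)^{-1}(hA) = I - (E-hA)^{-1}E$, which stays \emph{bounded} (its norm is controlled uniformly in $h$) even though $M_h$ diverges — the singular directions of $E$ are exactly the ones that make $M_h$ large, but they enter $x_h$ only through the well-behaved product $hM_h$. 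I would exploit this by estimating $\norm{e^{hM_h} - (I+hM_h)}$ in terms of $\norm{hM_h}$ via the standard bound $\norm{e^{W} - (I+W)} \le \tfrac{1}{2}\norm{W}^2 e^{\norm{W}}$, and then showing the accumulated discrepancy over $N = T/h$ steps is $O(h)$ after using a discrete Grönwall argument whose growth factor depends only on the uniformly bounded quantity $\norm{I + hM_h} \le 1 + \norm{hM_h}$, not on $L_h$. This is the delicate estimate and where the regularity of the DAE does the real work, since it guarantees $(E-hA)^{-1}E$ converges to the spectral projector onto the finite eigenvalues and thus keeps $hM_h$ controlled.

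Finally I would reassemble the bound. Given $\eps > 0$ and $T$, the DAE convergence theorem supplies $h_1$ making the Euler-vs-DAE term below $\eps/3$; the estimate of the previous paragraph supplies $h_2$ making the Euler-vs-$x_h$ term below $\eps/3$; and Lipschitz continuity of $x_h$ on $[0,T]$ (again using that $hM_h$, hence the one-step continuous increment, is controlled) bounds the interpolation error between grid points by $\eps/3$ for small $h$. Taking $h \le \min(h_1,h_2)$ small enough yields $\sup_{0\le t\le T}\norm{x(t)-x_h(t)} \le \eps$, as required. I expect the only genuinely hard part to be the uniform-in-$h$ control in the middle step; the outer assembly is routine once that estimate is in hand.
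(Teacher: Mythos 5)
Your overall decomposition (DAE vs.\ Euler sequence, Euler sequence vs.\ $x_h$, plus interpolation between grid points) is sensible, and you correctly identify that the naive one-step error bound is useless because the Lipschitz constant of $F_h$ blows up as $h \to 0$. But the central estimate of your middle step has a genuine gap, in two respects. First, the ``crucial observation'' that $hM_h = h(E-hA)^{-1}A = I - (E-hA)^{-1}E$ stays bounded uniformly in $h$ is false for regular DAEs of index $\nu \geq 2$: in Weierstrass form ($E = \mathrm{diag}(I,N)$, $A = \mathrm{diag}(J,I)$ with $N$ nilpotent, $N^\nu = 0 \neq N^{\nu-1}$) one computes $(E-hA)^{-1}E = \mathrm{diag}\bigl((I-hJ)^{-1},\, -\sum_{l=1}^{\nu-1} N^l/h^l\bigr)$, so $\norm{hM_h}$ grows like $h^{-(\nu-1)}$. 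Second, even in the index-$1$ case where $hM_h$ is bounded, it does not tend to zero on the algebraic directions (there $hM_h \approx -I$), so your bound $\norm{e^{W}-(I+W)} \leq \tfrac12\norm{W}^2 e^{\norm{W}}$ yields a per-step propagator discrepancy of size $O(1)$ rather than $O(h^2)$; summed over $N = T/h$ steps, even with a benign discrete Gronwall factor, this does not vanish as $h \to 0$. The discrepancy is in fact harmless, but only because both propagators preserve the consistent affine subspace $\calD$ and the initial condition lies in it --- the comparison must be carried out after restricting to $\calD$, which your operator-norm estimates do not do.

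That restriction to $\calD$ is exactly how the paper proceeds, and once it is in place the detour through the discrete sequence becomes unnecessary. The paper shows (Propositions~\ref{prop_1} and~\ref{prop_2}) that $\calD$ is an affine subspace invariant both under the exact flow and under $x \mapsto x + hF_h(x)$ (hence under the flow of $\dert x_h = F_h(x_h)$), that on $\calD$ the DAE solution coincides with that of a genuine ODE $\dert x = \hat{A}x + \hat{b}$, and that $F_h$ converges to $\hat{A}x+\hat{b}$ \emph{uniformly on bounded subsets of $\calD$} (not of $\RE^n$, where $F_h$ diverges); a continuous Gronwall comparison of the two vector fields (Proposition~\ref{prop_0}) then finishes the proof. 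To salvage your route you would need the analogous statement at the level of propagators restricted to $\calD$; as written, the key estimate does not close.
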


We now extend Theorem~\ref{thm_main} to systems (\ref{eq:dae})-(\ref{eq:input}).

\begin{theorem}\label{thm_main_ext}
Given a DAE system via (\ref{eq:dae}) and (\ref{eq:input}), consider the ODE system
\begin{align}\label{eq_ode_approx_compositional}
\dert x_h = (E - hA)^{-1}\left(A x_h + B u_h^{\langle 0 \rangle} + h B u_h^{\langle 1 \rangle} \right) ,
\end{align}
where $h > 0$ is small and the functions $u_h^{\langle 0 \rangle}, u_h^{\langle 1 \rangle} \in \RE^m$ satisfy
\begin{align}
\begin{pmatrix}
\dert u^{\langle 0 \rangle}_h \\
\dert z^{\langle 0 \rangle}_h
\end{pmatrix} & =
(I - h D)^{-1} D
\begin{pmatrix}
u_h^{\langle 0 \rangle} \\
z_h^{\langle 0 \rangle}
\end{pmatrix} + (I - h D)^{-1} d \label{eq_crn_1} \\
\begin{pmatrix}
\dert u^{\langle 1 \rangle}_h\\
\dert z^{\langle 1 \rangle}_h
\end{pmatrix}
& = (I - h D)^{-1} D
\begin{pmatrix}
u_h^{\langle 1 \rangle}\\
z_h^{\langle 1 \rangle}\\
\end{pmatrix}\label{eq_crn_2}
\end{align}
with initial conditions
\begin{align*}
u_h^{\langle 0 \rangle}(0) & = u(0) &
u_h^{\langle 1 \rangle}(0) & = (I - h D)^{-1} D u(0) \\
z_h^{\langle 0 \rangle}(0) & = z(0) &
z_h^{\langle 1 \rangle}(0) & = (I - h D)^{-1} D z(0)
\end{align*}
Then, for any $\varepsilon > 0$, $x(0) \in \calD$ and $T > 0$, there exists an $h > 0$ such that $\sup_{0 \leq t \leq T} \norm{x(t) - x_h(t)} \leq \varepsilon$ if $x_h(0) = x(0)$.
\end{theorem}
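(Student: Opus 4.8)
The plan is to reduce Theorem~\ref{thm_main_ext} to Theorem~\ref{thm_main} by \emph{augmenting} the state so that the time-varying input becomes part of an autonomous, constant-term DAE. Stacking $\hat{x} = (x, u, z)^T$, the coupled system (\ref{eq:dae})--(\ref{eq:input}) can be written as a single DAE $\hat{E}\,\dert\hat{x} = \hat{A}\hat{x} + \hat{b}$ with \emph{constant} $\hat{b}$, where $\hat{E} = \mathrm{diag}(E, I)$, $\hat{A} = \begin{pmatrix} A & \bar{B}\\ 0 & D\end{pmatrix}$ with $\bar{B} = \begin{pmatrix} B & 0\end{pmatrix}$ embedding the input into the $x$-equation, and $\hat{b} = (0, d)^T$. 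Because the lower block is a genuine ODE, the augmented DAE has a unique solution whose $x$-component is exactly the solution of the original system. The first step is then to apply Theorem~\ref{thm_main} to this augmented DAE: the ODE $\dert\hat{x}_h = (\hat{E} - h\hat{A})^{-1}(\hat{A}\hat{x}_h + \hat{b})$ started at $\hat{x}_h(0) = (x(0), u(0), z(0))^T$ converges uniformly on $[0,T]$ to $\hat{x}$, so in particular its $x$-component converges to $x$.

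The crux is to show that the $x$-component of this augmented Euler ODE is \emph{identical} to the compositional ODE (\ref{eq_ode_approx_compositional}). Since $\hat{E} - h\hat{A}$ is block upper-triangular with diagonal blocks $E - hA$ and $I - hD$, its inverse is again block upper-triangular, with off-diagonal block $h(E-hA)^{-1}\bar{B}(I-hD)^{-1}$; both diagonal blocks are invertible for small $h$ (regularity of the original DAE, which makes $F_h$ well-defined, together with $\norm{hD} < 1$), so the augmented pencil is regular. Reading off the two components, the $(u,z)$-part obeys $\dert w_h = (I-hD)^{-1}(D w_h + d)$, which is precisely (\ref{eq_crn_1}); hence $w_h = (u_h^{\langle 0 \rangle}, z_h^{\langle 0 \rangle})^T$. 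Substituting this back and using the identity $(I-hD)^{-1}(Dw_h + d) = \dert w_h$, the $x$-component reduces to $\dert x_h = (E-hA)^{-1}\big(A x_h + B u_h^{\langle 0 \rangle} + h B\,\dert u_h^{\langle 0 \rangle}\big)$.

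It therefore remains only to identify $\dert u_h^{\langle 0 \rangle}$ with $u_h^{\langle 1 \rangle}$. Differentiating (\ref{eq_crn_1}) shows that $\dert w_h^{\langle 0 \rangle}$ solves the homogeneous linear ODE $\dert v = (I-hD)^{-1} D v$, which is exactly (\ref{eq_crn_2}); as a linear ODE is determined by its initial value, one checks that the initial data prescribed for $(u_h^{\langle 1 \rangle}, z_h^{\langle 1 \rangle})$ is chosen precisely so that $(u_h^{\langle 1 \rangle}, z_h^{\langle 1 \rangle})^T = \dert w_h^{\langle 0 \rangle}$. Consequently the compositional $x$-ODE coincides with the augmented Euler $x$-ODE, and the convergence established in the first step yields $\sup_{0 \leq t \leq T}\norm{x(t) - x_h(t)} \leq \varepsilon$ for $h$ small.

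The main obstacle is twofold. First, I must verify that the augmented pencil is regular and that the lifted point $(x(0), u(0), z(0))^T$ is a \emph{consistent} initial condition of the augmented DAE, so that Theorem~\ref{thm_main} genuinely applies; this requires checking that lifting $x(0) \in \calD$ together with the (now time-varying) input does not violate the algebraic constraints. Second, and more delicate, is the exact bookkeeping of initial data that forces $u_h^{\langle 1 \rangle} = \dert u_h^{\langle 0 \rangle}$: any mismatch here would enter through the off-diagonal block $h(E-hA)^{-1}\bar{B}$, whose limit as $h \to 0$ is \emph{nonzero} along the algebraic (singular) directions of the pencil, and would therefore corrupt the limit rather than wash out. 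Pinning down this identification is the step where the precise form of (\ref{eq_crn_2}) and its initial condition does the essential work.
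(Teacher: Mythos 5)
Your proposal is correct and follows essentially the same route as the paper: the paper likewise augments the state to $(x,u,z)^T$ to obtain a constant-term DAE, inverts the block upper-triangular pencil $\hE - h\hA$ (Proposition~\ref{prop_extended_case}), reads off the splitting into $u_h^{\langle 0 \rangle}$ and the homogeneous companion $u_h^{\langle 1 \rangle} = (I-hD)^{-1}(D u_h^{\langle 0 \rangle} + d) = \dert u_h^{\langle 0 \rangle}$ (Corollary~\ref{thm_extended_case_2}), and then invokes the constant-input convergence result. The bookkeeping of initial data that you flag as the delicate step is precisely what the paper's corollary records, so your plan matches the published argument.
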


Note that $I - h D$ is strictly diagonal dominant and therefore invertible for sufficiently small values of $h$. It can be shown that $(u^{\langle 0 \rangle}_h, v^{\langle 0 \rangle}_h)^T$ converges to $(u,v)^T$ from~(\ref{eq:input}) as $h \to 0$. Hence, Theorem~\ref{thm_main_ext} essentially replaces the constant vector $b$ in $F_h(x) = (E - hA)^{-1}(A x + b)$ by the function $B u$.

%Let us apply Theorem~\ref{thm_main_ext} to our running example~(\ref{DAE_3c}) in the case where $\Vin(t) = \sin(t)$, see Figure~\ref{fig:Figure3}. With $E$ and $A$ being as in~(\ref{DAE_3c}), we set $B = (0,-1)^T$ and express $u(t) := \Vin(t) = \sin(t)$ by the linear ODEs $\dot{v}_A = -z$ and $\dot{z} = \Vin$ that are subject to the initial conditions $\Vin(0) = 0$ and $z(0) = 1$ (note that $z$ is an auxiliary variable that does not show up in the DAE system but allows us to capture $\Vin = \sin$). Since this shows that the DAE system can be cast into the format of Theorem~\ref{thm_main_ext}, we conclude that the ODE approximation~(\ref{eq_ode_approx_compositional}-\ref{eq_crn_2}) converges to the solution of the DAE system as $h \to 0$.

\subsection{From ODEs to CRNs}\label{sec_odes_2_crns}

We next present a technique that transforms the ODE approximation from Section~\ref{sec_dae_2_ode} into a CRN. The approach borrows ideas from~\cite{5979221} that transforms linear ODE systems (i.e., systems \emph{without} algebraic constraints) into CRNs. We wish to point out, however, that our approach considers state space representation, while~\cite{5979221} acts on the frequency domain.

In the following, let $\dert x = \hA x + \hb$ denote some ODE system with initial condition $x(0)$.
\begin{proposition}\label{prop_positivation}
Any non-negative quadruple $(\hA^+,\hA^-,\hb^+,\hb^-)$ satisfying $\hA = \hA^+ - \hA^-$ and $\hb = \hb^+ - \hb^-$ induces the positivation
\begin{equation}\label{eq_positivation}
\begin{split}
\dert x^+ & = \hA^+ x^+  +   \hA^- x^-  +  \hb^+ \\
\dert x^- & = \hA^+ x^-  +   \hA^- x^+  +  \hb^-
\end{split}
\end{equation}
If~(\ref{eq_positivation}) is subject to non-negative $x^+(0), x^-(0) \in \RE^n_{\geq0}$ with $x(0) = x^+(0) - x^-(0)$, the solution $(x^+, x^-)$ remains non-negative and satisfies $x = x^+ - x^-$.
\end{proposition}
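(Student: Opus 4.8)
The plan is to prove the two assertions separately. The identity $x = x^+ - x^-$ is essentially immediate: setting $y := x^+ - x^-$ and subtracting the two equations in (\ref{eq_positivation}), the cross terms regroup as $(\hA^+ - \hA^-)(x^+ - x^-)$ and the constants as $\hb^+ - \hb^-$, so that $\dert y = \hA y + \hb$ with $y(0) = x^+(0) - x^-(0) = x(0)$. Thus $y$ solves exactly the same linear initial value problem as $x$, and uniqueness of solutions to linear (hence locally Lipschitz) ODEs forces $y \equiv x$. This step uses only the decompositions $\hA = \hA^+ - \hA^-$ and $\hb = \hb^+ - \hb^-$, not any non-negativity.

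The real content is the non-negativity claim, for which I would stack the system as $\dert w = M w + c$ with $w = (x^+, x^-)^T$, block matrix $M = \left(\begin{smallmatrix} \hA^+ & \hA^- \\ \hA^- & \hA^+\end{smallmatrix}\right)$, and constant $c = (\hb^+, \hb^-)^T$. The structural observation is that $M$ is a \emph{Metzler} matrix, i.e.\ all its off-diagonal entries are non-negative: every such entry is either an off-diagonal entry of a block $\hA^+$ or an arbitrary entry of a block $\hA^-$, and all of these are non-negative by the hypothesis that the quadruple is non-negative. Likewise $c \geq 0$ and, by assumption, $w(0) \geq 0$ componentwise.

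Given that $M$ is Metzler, I would show $e^{Mt} \geq 0$ entrywise for $t \geq 0$ by a shift: choose $\sigma > 0$ with $M + \sigma I \geq 0$ entrywise, so that $e^{(M + \sigma I)t} = \sum_{k \geq 0} t^k (M + \sigma I)^k / k!$ is a sum of non-negative matrices and hence $e^{Mt} = e^{-\sigma t} e^{(M + \sigma I)t} \geq 0$. The variation-of-constants formula $w(t) = e^{Mt} w(0) + \int_0^t e^{M(t-s)} c \, ds$ then exhibits $w(t)$ as a sum of products of non-negative quantities, yielding $x^+(t), x^-(t) \geq 0$ for all $t \geq 0$. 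I expect no serious obstacle; the only point requiring care is the bookkeeping that certifies the Metzler property, namely that the block placement of $\hA^-$ never lands an entry on the diagonal of $M$, where sign control would be lost. Should a coordinate-free argument be preferred, Nagumo's subtangentiality condition gives the same conclusion: on the face $\{w_i = 0\}$ of the non-negative orthant one has $\dert w_i = \sum_{j \neq i} M_{ij} w_j + c_i \geq 0$, so the vector field is never outward-pointing and the orthant is forward invariant.
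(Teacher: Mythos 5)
Your proof is correct. The second half (showing $x = x^+ - x^-$ by subtracting the two equations and invoking uniqueness of solutions of the linear initial value problem) is exactly the paper's argument. For the non-negativity claim the paper simply appeals to ``the theory of differential inequalities (or monotonic systems)'' without further detail, whereas you give a self-contained proof: you stack the system as $\dert w = Mw + c$, note the Metzler structure of $M$, and conclude via $e^{Mt} \geq 0$ and variation of constants. This is a legitimate and arguably preferable route for the linear case, since it replaces a citation to general monotone-systems theory with an explicit two-line computation; the Nagumo subtangentiality remark you add is essentially the content of the result the paper is citing, so the two approaches rest on the same invariance principle. One small simplification you missed: because the \emph{entire} quadruple is non-negative, every entry of $M$ (diagonal included, since the diagonal blocks are $\hA^+ \geq 0$) is non-negative, so $M \geq 0$ entrywise and $e^{Mt} = \sum_{k\geq 0} t^k M^k / k!$ is non-negative term by term --- the shift by $\sigma I$ and the worry about diagonal placement of $\hA^-$ entries are unnecessary here, though harmless and useful if one later weakens the hypotheses to only require non-negative off-diagonal structure.
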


While positivations trivially satisfy the properties of the Hungarian lemma discussed in Section~\ref{sec_outline} and can therefore be readily translated into CRNs, they may exhibit divergence even if the original system is bounded, see for instance~(\ref{DAE_5a}), which implies that $i^+$ and $i^-$ diverge. Fortunately, one can apply a correction that leads to bounded positivations.

\begin{proposition}\label{prop_bounded}
Given a positivation $(\hA^+,\hA^-,\hb^+,\hb^-)$, define the quadratic function $Q(x^+,x^-) = (x_1^+ x_1^-,\ldots,x_n^+ x_n^-)^T$. Then, for any $\gamma > 0$, the corresponding Hungarization is
\begin{equation}\label{eq_hungarization}
\begin{split}
\dert x^+ & = \hA^+ x^+  +   \hA^- x^-  +  \hb^+ - \gamma Q(x^+,x^-) \\
\dert x^- & = \hA^+ x^-  +   \hA^- x^+  +  \hb^- - \gamma Q(x^+,x^-)
\end{split}
\end{equation}
If~(\ref{eq_hungarization}) is subject to non-negative $x^+(0), x^-(0) \in \RE^n_{\geq0}$ with $x(0) = x^+(0) - x^-(0)$, ODE system~(\ref{eq_hungarization}) admits a non-negative solution on $[0;\infty)$ that satisfies $x = x^+ - x^-$. Moreover, if $x$ is bounded on $[0;\infty)$, then so is $(x^+, x^-)$.
\end{proposition}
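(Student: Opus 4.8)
The plan is to establish three things in order: (i) that the system~(\ref{eq_hungarization}) is well-posed and its solution exists on all of $[0;\infty)$; (ii) that the difference $x^+ - x^-$ still solves the original ODE $\dert x = \hA x + \hb$, so that $x = x^+ - x^-$; and (iii) that the solution stays non-negative and is bounded whenever $x$ is. The key algebraic observation driving everything is that the correction term $-\gamma Q(x^+,x^-)$ is \emph{identical} in both the $\dert x^+$ and the $\dert x^-$ equations. Hence, subtracting the second line of~(\ref{eq_hungarization}) from the first, the $\gamma Q$ terms cancel exactly, and we recover precisely the computation already used in Proposition~\ref{prop_positivation}: $\dert(x^+ - x^-) = \hA^+(x^+ - x^-) - \hA^-(x^+ - x^-) + \hb^+ - \hb^- = \hA(x^+ - x^-) + \hb$. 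Since $x^+(0) - x^-(0) = x(0)$, uniqueness for the linear ODE gives $x = x^+ - x^-$ for as long as the positivized solution exists. This disposes of claim~(ii) immediately.

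For claim~(iii), the crucial step is non-negativity, and here I would invoke an invariance argument for the non-negative orthant $\RE^{2n}_{\geq 0}$. The standard tool is to check the \emph{sub-tangentiality} (Nagumo-type) condition on each face of the orthant: whenever some coordinate, say $x_j^+$, vanishes while all others remain non-negative, the corresponding velocity must be non-negative so the flow cannot cross outward. Evaluating $\dert x_j^+$ at $x_j^+ = 0$, the quadratic correction $-\gamma\, x_j^+ x_j^-$ vanishes (because it carries the factor $x_j^+$), and the remaining terms $(\hA^+ x^+)_j + (\hA^- x^-)_j + \hb_j^+$ are sums of products of non-negative quantities, hence $\geq 0$. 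The same holds for $x_j^-$ by symmetry of the construction. Thus the non-negative orthant is forward-invariant, which simultaneously guarantees non-negativity and, by standard continuation arguments together with the \emph{a priori} bound obtained below, existence on $[0;\infty)$.

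It remains to establish boundedness under the hypothesis that $x$ is bounded. This is where the dissipative role of $\gamma Q$ must be made quantitative, and I expect it to be the main obstacle. The idea is to control the growth of the \emph{sum} rather than the difference: adding the two lines of~(\ref{eq_hungarization}) gives $\dert(x^+ + x^-) = (\hA^+ + \hA^-)(x^+ + x^-) + (\hb^+ + \hb^-) - 2\gamma Q(x^+,x^-)$. Componentwise, writing $s_j = x_j^+ + x_j^-$ and using $x_j^+ x_j^- \geq \tfrac14(s_j^2 - (x_j^+ - x_j^-)^2) = \tfrac14(s_j^2 - x_j^2)$, the quadratic term contributes a term of order $-\tfrac{\gamma}{2} s_j^2$, which dominates the linear growth $(\hA^+ + \hA^-)(x^+ + x^-)$ once $s_j$ is large, while the correction $\tfrac{\gamma}{2} x_j^2$ is controlled by the assumed bound on $x$. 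A comparison (Gronwall-type) argument against the scalar Riccati-like inequality $\dert s_j \leq c_1 s_j + c_2 - \tfrac{\gamma}{2} s_j^2 + c_3\sup_t\norm{x}^2$ then yields a uniform bound on $s_j$, and since $x^+, x^- \geq 0$ we have $0 \leq x_j^{\pm} \leq s_j$, giving boundedness of $(x^+, x^-)$. The delicate point is that $Q$ couples the coordinates only diagonally, so the off-diagonal linear growth from $\hA^+ + \hA^-$ must be absorbed carefully; I would handle this by summing over $j$ and working with $\sum_j s_j$ or with a weighted $\ell^1$ Lyapunov function $V = \mathbf{1}^T(x^+ + x^-)$, whose derivative inherits a strictly negative definite quadratic contribution $-2\gamma \sum_j x_j^+ x_j^-$ that eventually overwhelms the linear terms.
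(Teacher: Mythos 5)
Your proposal is correct, and while parts (i) and (ii) coincide with the paper's argument (the paper also gets non-negativity from the fact that the only negative term in $\dert x^\pm_j$ carries the factor $x^\pm_j$ --- phrased via differential inequalities rather than Nagumo's sub-tangentiality condition, and it also gets $x = x^+ - x^-$ from the cancellation of the $Q$ terms plus Proposition~\ref{prop_positivation}), your boundedness argument takes a genuinely different route. The paper works componentwise: it sets $\xi := \sup_t \norm{x(t)}_\infty$, substitutes $\xm_i = \xp_i - x_i$ into the quadratic term to obtain $\dert \xp_i \leq (\hA^+ x^+ + \hA^- x^- + \hb^+)_i + \gamma\xi \xp_i - \gamma(\xp_i)^2$, and concludes that each coordinate is eventually trapped in $[0;\zeta]$ because the velocity becomes negative once $\xp_i \geq \zeta$. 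You instead sum the two equations, use the identity $x_j^+x_j^- = \tfrac14\bigl((x_j^++x_j^-)^2 - x_j^2\bigr)$, and run a Riccati comparison on the scalar Lyapunov function $V = \mathbf{1}^T(x^+ + x^-)$, with Cauchy--Schwarz supplying $\sum_j s_j^2 \geq V^2/n$. Your global version is arguably the more robust of the two: the paper's componentwise threshold claim, read literally for all points of the orthant, is delicate when $\hA^+ + \hA^-$ has off-diagonal entries (the linear part of $g_i^+$ then grows with the \emph{other} coordinates, whose boundedness is exactly what is being proved), whereas your $\ell^1$ aggregation absorbs that cross-coupling cleanly. One presentational caveat: you invoke the a priori bound to get existence on $[0;\infty)$, but that bound assumes $x$ bounded globally, which is only hypothesized in the ``moreover'' clause; to close the loop you should note that on any maximal existence interval $x = x^+ - x^-$ agrees with the solution of the linear ODE $\dert x = \hA x + \hb$ and is therefore bounded on bounded time intervals, so your Riccati estimate rules out finite-time blow-up there (the paper instead gets global existence directly from the non-positivity of $\gamma Q$ on the orthant together with the linear growth of the remaining terms).
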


By applying the law of mass action, it can be easily seen that the $Q$ terms in~(\ref{eq_hungarization}) are captured by the annihilation reactions $\xp_1 + \xm_1 \act{\gamma} \emptyset,  \ldots, \xp_n + \xm_n \act{\gamma} \emptyset$. Combining this with Proposition~\ref{prop_bounded}, we arrive at the following statement.

\begin{proposition}\label{prop_crn_encoding}
Define the chemical reactions of Hungarization~(\ref{eq_hungarization}) as
\begin{align*}
\calR & = \{ x^+_i + x^-_i \act{\gamma} \emptyset \mid i \} \cup \{ \act{\hb^+_i} x^+_i \mid i \} \cup \{ \act{\hb^-_i} x^-_i \mid i \} \\
& \cup \{ x^+_j \act{\hA^+_{i,j}} x^+_j + x^+_i \mid i,j \} \cup \{ x^-_j \act{\hA^-_{i,j}} x^-_j + x^+_i \mid i,j \} \\
& \cup \{ x^+_j \act{\hA^-_{i,j}} x^+_j + x^-_i \mid i,j \} \cup \{ x^-_j \act{\hA^+_{i,j}} x^-_j + x^-_i \mid i,j \}
\end{align*}
Then, reactions $\calR$ induce, via the law of mass action, the ODE system~(\ref{eq_hungarization}).
\end{proposition}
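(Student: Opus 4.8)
The plan is to prove the claim by a direct, reaction-by-reaction application of the law of mass action, checking that the contributions of the seven families of reactions in $\calR$ sum exactly to the right-hand sides of~(\ref{eq_hungarization}). Recall that under mass action a reaction with rate constant $c$ whose reactant multiset has concentration-monomial $\mu$ proceeds at rate $c\,\mu$, and contributes to the rate of change of each species the quantity $c\,\mu$ multiplied by that species' net stoichiometric coefficient (products minus reactants). First I would observe that every reaction in $\calR$ is well formed: the source reactions $\act{\hb^\pm_i} x^\pm_i$ are zeroth order with monomial $1$, the four linear families are catalytic first-order reactions whose monomial is a single concentration, and the annihilation reactions $x^+_i + x^-_i \act{\gamma} \emptyset$ are bimolecular with monomial $x^+_i x^-_i$. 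Moreover all rate constants $\hA^\pm_{i,j}$, $\hb^\pm_i$ and $\gamma$ are non-negative, which is exactly what the positivation hypothesis guarantees, so these are legitimate mass-action reactions.

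Next I would compute the constant and linear parts. The source family $\act{\hb^+_i} x^+_i$ contributes $+\hb^+_i$ to the equation for $x^+_i$, yielding the term $\hb^+$, and symmetrically for $\hb^-$. For the catalytic family $x^+_j \act{\hA^+_{i,j}} x^+_j + x^+_i$ the reactant $x^+_j$ is a catalyst, so the reaction rate is $\hA^+_{i,j} x^+_j$ and the only net production is one copy of $x^+_i$; summing over $j$ gives $\sum_j \hA^+_{i,j} x^+_j = (\hA^+ x^+)_i$ in the equation for $x^+_i$. The remaining three catalytic families are handled identically and produce the term $(\hA^- x^-)_i$ in the equation for $x^+_i$, and the terms $(\hA^- x^+)_i$ and $(\hA^+ x^-)_i$ in the equation for $x^-_i$. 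Thus the linear and constant contributions reproduce $\hA^+ x^+ + \hA^- x^- + \hb^+$ and $\hA^+ x^- + \hA^- x^+ + \hb^-$, as required.

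Finally, the annihilation reaction $x^+_i + x^-_i \act{\gamma} \emptyset$ proceeds at rate $\gamma x^+_i x^-_i$ and destroys one copy of each of $x^+_i$ and $x^-_i$, hence contributes $-\gamma x^+_i x^-_i$, i.e. the $i$-th component of $-\gamma Q(x^+,x^-)$, to both equations. Adding all contributions recovers~(\ref{eq_hungarization}) componentwise, which completes the argument.

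I do not expect a genuine obstacle: the statement is essentially a bookkeeping identity between the monomials of~(\ref{eq_hungarization}) and the reactions of $\calR$, reflecting the Hungarian-lemma construction in which each monomial becomes its own reaction. The only points that require care are the treatment of catalysts — the reactant reappearing unchanged among the products, so that its own net stoichiometry is zero — and in particular the diagonal cases $i = j$ of the catalytic families, where for instance $x^+_j \act{\hA^+_{j,j}} x^+_j + x^+_j$ gives net production $+1$ of $x^+_j$ and correctly supplies the diagonal term $\hA^+_{j,j} x^+_j$ of $(\hA^+ x^+)_j$. I would also confirm that no monomial of~(\ref{eq_hungarization}) is produced by two distinct reactions, so that the summation introduces no double counting; this is immediate from the definition of $\calR$.
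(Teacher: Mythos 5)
Your proof is correct and is exactly the direct mass-action bookkeeping the paper has in mind: the paper's own proof of this proposition is simply the word ``Straightforward,'' and your verification (including the careful handling of catalysts and the diagonal case $i=j$) fills in precisely that routine computation. No gaps.
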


For instance, if the positivation is given by~(\ref{DAE_4a}-\ref{DAE_4b}), then~(\ref{eq_hungarization_example}-\ref{eq_correction_example}) constitute $\calR$.

Proposition~\ref{prop_crn_encoding} and Theorem~\ref{thm_main_ext} yield our main result.

\begin{theorem}\label{thm_main_ext_ext}
Given a DAE system via (\ref{eq:dae}) and (\ref{eq:input}), let
\begin{itemize}
    \item $\calH_{1,h}$ and $\calH_{2,h}$ denote the Hungarization of~(\ref{eq_ode_approx_compositional}) and~(\ref{eq_crn_1}-\ref{eq_crn_2}), respectively.
    \item $\calR_{1,h}$ and $\calR_{2,h}$ refer to the chemical reactions of $\calH_{1,h}$ and $\calH_{2,h}$, respectively.
\end{itemize}
Then, the following holds true.
\begin{enumerate}[a)]
    \item The solution of the union CRN given by $\calR_{1,h} \cup \calR_{2,h}$ converges to the DAE solution as $h \to 0$.
    \item A change of $D$ and $d$ affects the reactions $\calR_{2,h}$ but does not alter the reactions $\calR_{1,h}$.
\end{enumerate}
\end{theorem}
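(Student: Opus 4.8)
The plan is to reduce both claims to a single structural observation: the union CRN $\calR_{1,h}\cup\calR_{2,h}$ is nothing but the reaction set of the Hungarization of one autonomous linear ODE system obtained by stacking (\ref{eq_ode_approx_compositional}) on top of (\ref{eq_crn_1}-\ref{eq_crn_2}). Once this is established, part a) follows by chaining Proposition~\ref{prop_bounded} with Theorem~\ref{thm_main_ext}, and part b) follows by reading off the block structure of that stacked system.

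First I would assemble the combined state $y = (x_h, u_h^{\langle 0 \rangle}, z_h^{\langle 0 \rangle}, u_h^{\langle 1 \rangle}, z_h^{\langle 1 \rangle})^T$ and write the coupled dynamics of (\ref{eq_ode_approx_compositional}) and (\ref{eq_crn_1}-\ref{eq_crn_2}) as a single system $\dert y = \hA y + \hb$. The key point is that the inputs $u_h^{\langle\cdot\rangle}, z_h^{\langle\cdot\rangle}$ evolve autonomously, since (\ref{eq_crn_1}-\ref{eq_crn_2}) never reference $x_h$; hence $\hA$ is block triangular. Its $x_h$-rows carry the self-coupling $(E-hA)^{-1}A$ together with the catalytic coupling coefficients $(E-hA)^{-1}B$ and $(E-hA)^{-1}hB$, the input-rows carry $(I-hD)^{-1}D$, and the only nonzero entries of $\hb$ come from $(I-hD)^{-1}d$ in the input block. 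I would then argue, via Proposition~\ref{prop_crn_encoding}, that the reactions generated by the $x_h$-rows are exactly $\calR_{1,h}$ while those generated by the input-rows together with $\hb$ are exactly $\calR_{2,h}$, so that $\calR_{1,h}\cup\calR_{2,h}$ coincides with the Hungarization of $\dert y = \hA y + \hb$.

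For part a) I would invoke Proposition~\ref{prop_bounded} on this stacked system: its Hungarization admits a non-negative solution $(y^+,y^-)$ whose difference $y^+-y^-$ reproduces $y$ on $[0;\infty)$. Restricting to the $x_h$-block shows that the associated species differences of the union CRN equal $x_h$ exactly, and restricting to the input blocks shows the remaining differences equal $u_h^{\langle 0 \rangle}, u_h^{\langle 1 \rangle}$, and so on. Theorem~\ref{thm_main_ext} then supplies, for any $\varepsilon>0$, $x(0)\in\calD$ and $T>0$, a small $h>0$ with $\sup_{0\le t\le T}\norm{x(t)-x_h(t)}\le\varepsilon$, so the species differences of the union CRN converge to the DAE solution as $h\to0$. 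Part b) is immediate from the block structure recorded above: the coefficients feeding $\calR_{1,h}$ are built solely from $E$, $A$, $B$ and $h$ and are therefore insensitive to $D$ and $d$, whereas $D$ and $d$ enter only through $(I-hD)^{-1}D$ and $(I-hD)^{-1}d$, which generate precisely $\calR_{2,h}$.

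The main obstacle I anticipate is making the identity ``union CRN $=$ Hungarization of the stacked system'' fully rigorous, namely checking that the catalytic encoding of the time-varying term $Bu_h^{\langle 0 \rangle}+hBu_h^{\langle 1 \rangle}$ inside $\calH_{1,h}$ agrees, species by species and rate by rate, with the reactions produced by the off-diagonal coupling block when (\ref{eq_ode_approx_compositional}) and (\ref{eq_crn_1}-\ref{eq_crn_2}) are Hungarized jointly rather than separately. Concretely, this amounts to verifying that positivizing a product $B_{ij}u_{h,j}^{\langle 0 \rangle}$ and treating $u_{h,j}^{\langle 0 \rangle\pm}$ as catalysts yields exactly the monomials produced by the generic construction of Proposition~\ref{prop_crn_encoding}. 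Once this bookkeeping is settled, the remainder is a direct application of the already-established Propositions and of Theorem~\ref{thm_main_ext}.
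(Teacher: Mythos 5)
Your proposal is correct and follows essentially the same route as the paper, whose entire proof reads ``Follows from a direct combination of Proposition~\ref{prop_crn_encoding} and Theorem~\ref{thm_main_ext}''; you simply spell out the bookkeeping (stacking the systems, invoking Proposition~\ref{prop_bounded} for the species-difference identity, and reading off the block structure for part b)) that the paper leaves implicit. No gaps.
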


As anticipated in Section~\ref{sec_outline}, Theorem~\ref{thm_main_ext_ext} ensures that a) our encoding is correct up to a controllable error and b) that the CRN implementation of the circuit, $\calR_{1,h}$, does not depend on the CRN implementation of the input, $\calR_{2,h}$.

Theorems \ref{thm_main_ext} and \ref{thm_main_ext_ext} allow for composition of circuits: a circuit expressed as a DAE (\ref{eq:dae}), with input provided by and ODE (\ref{eq:input}), yields another ODE (\ref{eq_ode_approx_compositional}) that can be supplied as input to a further circuit. The corresponding CRNs can be composed as well.

\section{Methods Applied to Example}\label{sec:example}

The $RL$ circuit discussed in Section~\ref{sec_outline} is a high-pass filter, attenuating the low frequencies of the input while transmitting the high frequencies to the output. The \emph{cutoff frequency} is the frequency at which the input signal is attenuated by $\frac{1}{2}$ its power, or equivalently its amplitude is attenuated by $-3\text{dB} \approx \sqrt{\frac{1}{2}} \approx 0.707$. In our circuit, the cutoff frequency is $f_c = \frac{R}{2\pi L} \text{Hz}$, where $R$ is the value of the resistance (measured in ohm) and $L$ is the inductance (in henry).

Here we show how we can apply Theorem~\ref{thm_main_ext_ext} to the $RL$ circuit with a time-varying input represented by an arbitrary differentiable function. Such a function can be approximated arbitrarily well by a Fourier series $u(t)$ given by \begin{align*}%\label{eq_forier}
u(t) & = \alpha + \sum_{i = 1}^N \beta_i \sin(\omega_i t + \gamma_i) ,
\end{align*}
where $\alpha$, $\beta_i$, $\omega_i$, and $\gamma_i$ are constant parameters. It can be seen that $u(t)$ can be written as the solution of the ODE system:
\begin{align}
\partial_t u & = \sum_{i = 1}^N \beta_i\omega_i \bar{z}_i &
\partial_t z_i & = \omega_i \bar{z}_i &
\partial_t \bar{z}_i & = - \omega_i z_i \label{eq:fourier}
\end{align}
with initial conditions $z_i(0) = \sin(\gamma_i)$, $\bar{z}_i(0) = \cos(\gamma_i)$ and $u(0) = \alpha + \sum_{i = 1}^N \beta_i \sin(\gamma_i)$ with $1 \leq i \leq N$, whereby $z_i$ and $\bar{z}_i$ are  auxiliary variables whose solutions give the sinusoidal components of the series. We can recognize the system~(\ref{eq:fourier}) to be in the required form (\ref{eq:input}).

\begin{figure}
\begin{center}
\includegraphics[scale=1.0]{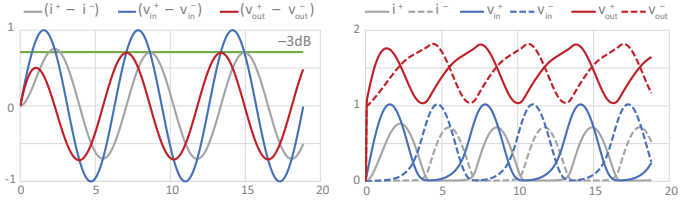}
\caption{\small \sl Left: simulation of the CRN from Figure \ref{fig:Figure2}, plotting variables differences, with $L=R=1$, input $\Vin = \Vinp - \Vinm$ of frequency $\frac{1}{2\pi} \text{Hz}$, and output $\Vout = \Voutp - \Voutm$. Horizontal axis is time. Right: the same, but plotting the individual variables.
         \label{fig:Figure3}}
\end{center}
\end{figure}
As a first example of an input waveform, we consider the ODE system
\begin{align} \label{eq_sin}
\begin{pmatrix}
\partial_t u \\
\partial_t z
\end{pmatrix}  =
\begin{pmatrix}
0 & 1 \\
-1 & 0 \\
\end{pmatrix} \cdot \begin{pmatrix}
u \\
z
\end{pmatrix}
\end{align}
With initial conditions $u(0) = 0$ and $z(0) = 1$, this yields the solution $u(t) = \sin(t)$ for all $t$, whose frequency is the cutoff frequency.
Figure \ref{fig:Figure3} shows simulations of the $\sin(t)$ CRN composed with the high-pass filter CRN, taking $h = 0.01$ for a sufficiently good approximation, and $\gamma = r$ for a sufficiently fast degradation. As expected, the output $\Vout$ is attenuated by $-3\text{dB} \approx 0.707$, and its phase is shifted by $45^{\circ}$. The variation of the underlying non-negative variables is shown on the right.

\begin{figure}
\begin{center}
\includegraphics[width=0.7\columnwidth]{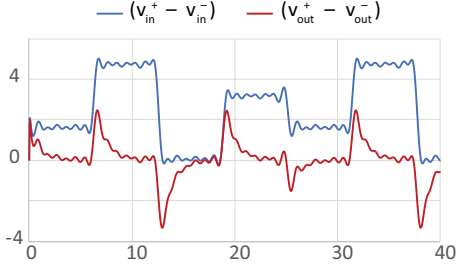}

\caption{\small \sl The circuit from Figure \ref{fig:Figure2} exhibiting perfect adaptation. Input $\Vin = \Vinp - \Vinm$, and output $\Vout = \Voutp - \Voutm$. Horizontal axis is time.
         \label{fig:Figure4}}
\end{center}
\end{figure}
As a second example, in a biological context, high pass filters exhibit \emph{perfect adaptation}~\cite{FerrellAdaptation}: they adapt to slow but possibly large variations in input stimulus and still react to quick changes. In Figure \ref{fig:Figure4} we supply stepped inputs via an appropriate Fourier series to our filter. At each sudden increase or decrease, the output reacts quickly and then settles back to its original level. The size of the transient response is proportional to the step size, but independent of the level of the input. The adaptation level can be set to any level, not just zero, by adding a constant contribution to $\Vout^+$, so that the output can represent the (positive) concentration of a certain protein.

\section{Discussion}
We have presented a method to convert linear DAEs to CRNs which hinges on a transformation into an approximate linear ODE system with arbitrary accuracy. This is then translated into a set of reactions where the time-course evolutions of the concentrations of the chemical species can be directly related to the original DAE solution.

In principle, any DAE system can be exactly transformed into an ODE system by means of so-called \emph{index reduction}~\cite{KunkelMehrmann}. However, this relies on symbolic computations. Moreover, the so-obtained ODE system will contain derivatives of the input signal, thus requiring for additional approximations. For instance, by differentiating (\ref{DAE_1b}) and using (\ref{DAE_1a}) we obtain $\partial_t \Vout = \partial_t \Vin - \frac{R}{L}\Vout$, which together with (\ref{DAE_1a}) is a simple ODE system in the dependent variables with no algebraic equations. This system now depends on the \emph{derivative} of the input, $\partial_t \Vin$, and would have to be combined with another circuit to supply that derivative from the given input $\Vin$. In contrast to index reduction our technique does not requires  input derivatives. Moreover, while index reduction requires symbolic computations, the matrix inversion at the basis of the construction of the approximate linear ODE system can also be performed using numerical techniques.

\begin{figure}
\begin{center}
\includegraphics[scale=0.70]{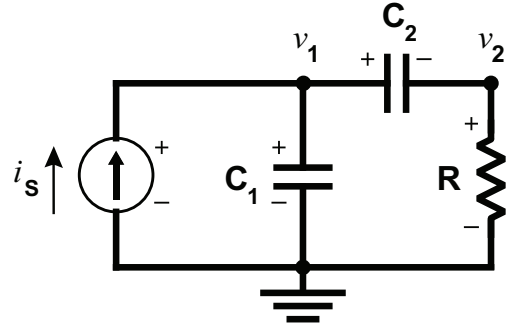}
\caption{\small \sl An electric circuit which gives the DAE system (\ref{eq_non_semi_explicit}) that is not in semi-explicit form.
         \label{fig:Figure5}}
\end{center}
\end{figure}

The precision of the linear ODE depends on a parameter which, when taken asymptotically small, has the effect of rapidly equilibrating certain components of the ODE system. Therefore, in this respect our approach can be related to quasi steady-state approximation  (QSSA,~\cite{VerhulstBook,Pantea2014}), which applies to \emph{semi-explicit} DAEs in the special form \begin{equation*}
\begin{split}
\partial_t x & = A_1 x + B_1 u \\
0 & = A_2 y + B_2 u\\
\end{split}
\end{equation*}
Essentially, QSSA replaces the algebraic constraints $0 = A_2 y + B_2 u$ with $\varepsilon \partial_t y = A_2 y + B_2 u$ for some $\varepsilon \approx 0$. However, DAEs of electric circuits are not semi-explicit in general. For instance, the circuit given in Figure~\ref{fig:Figure5} yields the DAE system
\begin{equation}\label{eq_non_semi_explicit}
\begin{split}
(C_1 + C_2) \partial_t v_1 - C_2 \partial_t v_2 & = i_S  \\
C_2 \partial_t v_1 - C_2 \partial_t v_2 & = \frac{1}{R} v_2 ,
\end{split}
\end{equation}
which is not semi-explicit because it has two differential variables on the left-hand side.

Extensions of this work are needed to tackle non-linear DAE systems arising from non-linear electronic components such as diodes and transistors. The conversion of polynomial ODEs to Hungarian and positive ones, and thus to CNRs, works essentially unchanged also for non-linear polynomial systems, and further extends to ODE systems including trigonometric and exponential functions, which can model transistors. However, this must be coupled with a general method for converting non-linear DAEs arising from electronic circuits to ODEs.

\bibliographystyle{IEEEtran}
\bibliography{rsi}

% Generated by IEEEtran.bst, version: 1.14 (2015/08/26)
\begin{thebibliography}{10}
\providecommand{\url}[1]{#1}
\csname url@samestyle\endcsname
\providecommand{\newblock}{\relax}
\providecommand{\bibinfo}[2]{#2}
\providecommand{\BIBentrySTDinterwordspacing}{\spaceskip=0pt\relax}
\providecommand{\BIBentryALTinterwordstretchfactor}{4}
\providecommand{\BIBentryALTinterwordspacing}{\spaceskip=\fontdimen2\font plus
\BIBentryALTinterwordstretchfactor\fontdimen3\font minus
  \fontdimen4\font\relax}
\providecommand{\BIBforeignlanguage}[2]{{%
\expandafter\ifx\csname l@#1\endcsname\relax
\typeout{** WARNING: IEEEtran.bst: No hyphenation pattern has been}%
\typeout{** loaded for the language `#1'. Using the pattern for}%
\typeout{** the default language instead.}%
\else
\language=\csname l@#1\endcsname
\fi
#2}}
\providecommand{\BIBdecl}{\relax}
\BIBdecl

\bibitem{Arkin2000}
A.~Arkin, \emph{Signal processing by biochemical reaction networks}.\hskip 1em
  plus 0.5em minus 0.4em\relax Cambridge University Press, 2000, pp. 112--144.

\bibitem{DelVecchio2008}
\BIBentryALTinterwordspacing
D.~Del~Vecchio, A.~J. Ninfa, and E.~D. Sontag, ``Modular cell biology:
  retroactivity and insulation,'' \emph{Molecular Systems Biology}, vol.~4,
  no.~1, 2008. [Online]. Available:
  \url{http://msb.embopress.org/content/4/1/161}
\BIBentrySTDinterwordspacing

\bibitem{Hart8346}
Y.~Hart, Y.~E. Antebi, A.~E. Mayo, N.~Friedman, and U.~Alon, ``{Design
  principles of cell circuits with paradoxical components},'' \emph{Proceedings
  of the National Academy of Sciences (PNAS)}, vol. 109, no.~21, pp.
  8346--8351, 2012.

\bibitem{Milo824}
R.~Milo, S.~Shen-Orr, S.~Itzkovitz, N.~Kashtan, D.~Chklovskii, and U.~Alon,
  ``{Network Motifs: Simple Building Blocks of Complex Networks},''
  \emph{Science}, vol. 298, no. 5594, pp. 824--827, 2002.

\bibitem{BUSH1931447}
V.~Bush, ``{The differential analyzer. A new machine for solving differential
  equations},'' \emph{{Journal of the Franklin Institute}}, vol. 212, no.~4,
  pp. 447--488, 1931.

\bibitem{doi:10.1002/sapm1941201337}
C.~E. Shannon, ``Mathematical theory of the differential analyzer,''
  \emph{{Journal of Mathematics and Physics}}, vol.~20, no. 1-4, pp. 337--354,
  1941.

\bibitem{Hungarian}
V.~Hars and J.~Toth, ``{On the Inverse Problem of Reaction Kinetics},'' in
  \emph{Qualitative Theory of Differential Equations}, vol.~30, 1979.

\bibitem{Soloveichik5393}
D.~Soloveichik, G.~Seelig, and E.~Winfree, ``{DNA as a universal substrate for
  chemical kinetics},'' \emph{Proceedings of the National Academy of Sciences
  (PNAS)}, vol. 107, no.~12, pp. 5393--5398, 2010.

\bibitem{ho1975modified}
C.-W. Ho, A.~Ruehli, and P.~Brennan, ``The modified nodal approach to network
  analysis,'' \emph{IEEE Transactions on Circuits and Systems}, vol.~22, no.~6,
  pp. 504--509, 1975.

\bibitem{Genelets}
M.~Schwarz-Schilling, J.~Kim, C.~Cuba, M.~Weitz, E.~Franco, and F.~Simmel,
  ``{Building a Synthetic Transcriptional Oscillator},'' \emph{{Methods Mol
  Biol.}}, vol. 1342, pp. 185--99, 2016.

\bibitem{doi:10.1137/0909014}
C.~C. Pantelides, ``{The Consistent Initialization of Differential-Algebraic
  Systems},'' \emph{SIAM Journal on Scientific and Statistical Computing},
  vol.~9, no.~2, pp. 213--231, 1988.

\bibitem{KunkelMehrmann}
P.~Kunkel and V.~Mehrmann, \emph{{Differential-Algebraic Equations. Analysis
  and Numerical Solution.}}\hskip 1em plus 0.5em minus 0.4em\relax Zurich,
  Switzerland: EMS Publishing House, 2006.

\bibitem{10.1007/978-3-319-19249-9_23}
J.~Liu, N.~Zhan, H.~Zhao, and L.~Zou, ``Abstraction of elementary hybrid
  systems by variable transformation,'' in \emph{Formal Methods}, N.~Bj{\o}rner
  and F.~de~Boer, Eds., 2015, pp. 360--377.

\bibitem{Cardelli2014}
L.~Cardelli, ``{Morphisms of reaction networks that couple structure to
  function},'' \emph{BMC Systems Biology}, vol.~8, no.~1, 2014.

\bibitem{electric:circuit}
A.~Agarwal and J.~Lang, \emph{Foundations of Analog and Digital Electronic
  Circuits}.\hskip 1em plus 0.5em minus 0.4em\relax Morgan Kaufmann, 2005.

\bibitem{5979221}
K.~Oishi and E.~Klavins, ``{Biomolecular implementation of linear I/O
  systems},'' \emph{IET Systems Biology}, vol.~5, no.~4, pp. 252--260, 2011.

\bibitem{deRonde2012}
W.~H. de~Ronde, F.~Tostevin, and P.~R. ten Wolde, ``Feed-forward loops and
  diamond motifs lead to tunable transmission of information in the frequency
  domain,'' \emph{Phys. Rev. E}, vol.~86, p. 021913, Aug 2012.

\bibitem{FerrellAdaptation}
J.~E. Ferrell, ``{Perfect and Near-Perfect Adaptation in Cell Signaling},''
  \emph{Cell Systems}, vol.~2, no.~2, pp. 62--67, 2016.

\bibitem{VerhulstBook}
F.~Verhulst, \emph{{Methods and applications of singular perturbations:
  boundary layers and multiple timescale dynamics}}.\hskip 1em plus 0.5em minus
  0.4em\relax Springer, 2005.

\bibitem{Pantea2014}
C.~Pantea, A.~Gupta, J.~B. Rawlings, and G.~Craciun, \emph{The {QSSA} in
  Chemical Kinetics: As Taught and as Practiced}.\hskip 1em plus 0.5em minus
  0.4em\relax Springer Berlin Heidelberg, 2014, pp. 419--442.

\end{thebibliography}

\appendix

\begin{proof}[Proof of Proposition~\ref{prop_positivation}]
Since $\hA^+,\hA^-,\hb^+$ and $\hb^-$ are non-negative, the theory of differential inequalities (or monotonic systems) readily implies that the solution $(x^+,x^-)$ of~(\ref{eq_positivation}) is non-negative whenever $(x^+(0), x^-(0))$ is non-negative. To see the second statement, let $(x^+,x^-)$ solve~(\ref{eq_positivation}) for $x(0) = x^+(0) - x^-(0)$. Then
\begin{align*}
\partial_t{x}^+ - \partial_t{x}^- & = (\hA^+ x^+  +   \hA^- x^-  +  \hb^+) \\
& \qquad - (\hA^+ x^-  +   \hA^- x^+  +  \hb^-) \\
& = (\hA^+ - \hA^-)(x^+ - x^-) + (\hb^+ - \hb^-) \\
& = \hA (x^+ - x^-) + \hb
\end{align*}
Since $\partial_t{x} = \hA x + \hb$ admits a unique solution, it must hold $x = x^+ - x^-$.
\end{proof}

\begin{proof}[Proof of Proposition~\ref{prop_bounded}]
Write~(\ref{eq_hungarization}) as $\partial_t{x}^+_i = g^+_i(x^+,x^-)$ and $\partial_t{x}^-_i = g^-_i(x^+,x^-)$. Since $g^+_i(z^+,z^-) \geq - \gamma z_i^+ z_i^-$ and $g^-_i(z^+,z^-) \geq - \gamma z_i^+ z_i^-$ for any $(z^+,z^-) \in \RE^{2n}_{\geq0}$ and the ODE system
\begin{align*}
\partial_t{z}^+_i & = - \gamma z_i^+ z_i^-, &
\partial_t{z}^-_i & = - \gamma z_i^+ z_i^-,
& 1 \leq i \leq n
\end{align*}
remains non-negative if initialized with non-negative values, we conclude that $(x^+,x^-)$ remains non-negative. Moreover, since $\gamma Q$ is non-positive on $\RE^{2n}_{\geq0}$, the solution of~(\ref{eq_hungarization}) is defined on $[0;\infty)$ and does not exhibit a finite explosion time. Since the second claim follows trivially from Proposition~\ref{prop_positivation} because the $Q$ terms cancel each other out in $\partial_t{x}^+ - \partial_t{x}^-$, let us focus on the third claim and set $\xi := \sup_{0 \leq t \leq \infty} \lVert x(t) \rVert_\infty < \infty$. Note that $x_i = \xp_i - \xm_i$, hence we get
\begin{align*}
\partial_t{x}^+_i & = \big(\hA^+ x^+  +   \hA^- x^-  +  \hb^+\big)_i - \gamma \xp_i \xm_i \\
& = \big(\hA^+ x^+  +   \hA^- x^-  +  \hb^+\big)_i - \gamma \xp_i (\xp_i - x_i) \\
& \leq \big(\hA^+ x^+  +   \hA^- x^-  +  \hb^+\big)_i + \gamma \xi \xp_i - \gamma (\xp_i)^2
\end{align*}
Since a similar calculation implies that
\begin{align*}
\partial_t{x}^-_i \leq \big(\hA^+ x^-  +   \hA^- x^+  +  \hb^-\big) +  \gamma \xi \xm_i - \gamma (\xm_i)^2 ,
\end{align*}
we infer that there exists a $\zeta > 0$ such that, for all $i$ and $(\zp,\zm) \in \RE^{2n}_{\geq0}$, it holds that
\begin{itemize}
    \item $g^+_i(\zp,\zm) \leq -1$ if $\zp_i \geq \zeta$ and;
    \item $g^-_i(\zp,\zm) \leq -1$ when $\zm_i \geq \zeta$.
\end{itemize}
This ensures that for any initial condition $(x^+(0),x^-(0)) \in \RE^{2n}_{\geq0}$, the solution $(\xp,\xm)$ enters eventually $[0;\zeta]^{2n}$ in order to remain there forever.
\end{proof}

\begin{proof}[Proof of Proposition~\ref{prop_crn_encoding}]
Straightforward.
\end{proof}

\begin{proof}[Proof of Theorem~\ref{thm_main_ext_ext}]
Follows from a direct combination of Proposition~\ref{prop_crn_encoding} and Theorem~\ref{thm_main_ext}.
\end{proof}

\section*{Proof of Theorem~\ref{thm_main} and Theorem~\ref{thm_main_ext}}

Before proving Theorem~\ref{thm_main} and~\ref{thm_main_ext}, we first have to establish some auxiliary results. To allow for a compact notation, we denote in the present section the $i$-th step of the numeric sequence by $x_i$ rather than $x[i]$.

\begin{proposition}\label{prop_0}
Consider the ODE systems $\partial_t{x} = F(x)$ and $\partial_t{x}_h = F_h(x)$ where $F$ and $F_h$ are assumed to be Lipschitz continuous on some bounded domain $B \subseteq \RE^n$ and $L \geq 0$ denotes the Lipschitz constant of $F$. Let us assume further that both ODE systems have solutions on $[0;T]$ which remain in $B$ and that $\sup\{ \norm{F(x) - F_h(x)} \mid x \in B \} \leq \eta$. Then, if $x(0) = x_h(0)$, for all $0 \leq t \leq T$ it holds that
\[
\norm{x(t)-x_h(t)} \leq \frac{\eta}{L} ( e^{L t} - 1 )
\]
\end{proposition}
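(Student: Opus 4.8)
Proposition~\ref{prop_0} is a standard Grönwall-type comparison estimate: two ODEs with the same initial condition and nearby vector fields stay close on a finite horizon, with the gap controlled by the product of the sup-distance $\eta$ between the fields and the Lipschitz constant $L$ of one of them. The plan is to reduce everything to an integral inequality for the scalar quantity $\delta(t) := \norm{x(t) - x_h(t)}$ and then apply Grönwall's lemma.

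First I would write both solutions in integral form,
\[
x(t) = x(0) + \int_0^t F(x(s))\, ds, \qquad
x_h(t) = x_h(0) + \int_0^t F_h(x_h(s))\, ds .
\]
Since $x(0) = x_h(0)$, subtracting and taking norms gives
\[
\delta(t) \leq \int_0^t \norm{F(x(s)) - F_h(x_h(s))}\, ds .
\]
The next step is the key algebraic move: insert and subtract the intermediate term $F(x_h(s))$ and use the triangle inequality, splitting the integrand into $\norm{F(x(s)) - F(x_h(s))}$ plus $\norm{F(x_h(s)) - F_h(x_h(s))}$. The first piece is bounded by $L\,\delta(s)$ using the Lipschitz continuity of $F$ on $B$ (both trajectories remain in $B$ by hypothesis, so $F$ is evaluated where the Lipschitz bound holds), and the second piece is bounded by $\eta$ using the uniform sup-bound $\sup_{x \in B}\norm{F(x) - F_h(x)} \leq \eta$. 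This yields the linear integral inequality
\[
\delta(t) \leq \eta\, t + L \int_0^t \delta(s)\, ds .
\]

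Finally I would apply the integral form of Grönwall's inequality. Applied directly, the inhomogeneous term $\eta t$ gives $\delta(t) \leq \eta t + L\int_0^t \eta s\, e^{L(t-s)}\,ds$, and evaluating this integral reproduces exactly the stated bound $\frac{\eta}{L}(e^{Lt}-1)$. A cleaner route, which I would prefer to keep the calculation transparent, is to note that $w(t) := \eta t$ is nondecreasing and invoke the standard corollary of Grönwall that $\delta(t) \leq w(t) + L\int_0^t w(s)e^{L(t-s)}ds$, or alternatively to define $u(t) := \frac{\eta}{L}(e^{Lt}-1)$ and verify directly that it solves $u'(t) = \eta + L\,u(t)$ with $u(0)=0$, so that $u$ dominates $\delta$ by a comparison argument for the integral inequality. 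The main (and essentially only) obstacle is the bookkeeping in the Grönwall step — one must be careful that the inhomogeneity is $\eta t$ rather than a constant, so that the resulting bound is $\frac{\eta}{L}(e^{Lt}-1)$ and not $\eta\, e^{Lt}$; everything else, including the hypothesis that both trajectories remain in $B$ (which is what licenses using the Lipschitz and sup bounds along the whole path), is routine. The edge case $L = 0$ is handled by taking the limit, giving $\delta(t) \leq \eta t$, consistent with $\frac{\eta}{L}(e^{Lt}-1) \to \eta t$ as $L \to 0$.
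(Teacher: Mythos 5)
Your argument is correct and follows essentially the same route as the paper: the same integral-form subtraction, the same insertion of the intermediate term $F(x_h(s))$ to split the error into a Lipschitz part bounded by $L\,\delta(s)$ and a perturbation part bounded by $\eta$, and the same integral inequality $\delta(t) \leq \eta t + L\int_0^t \delta(s)\,ds$ resolved by a Gr\"onwall-type argument (the paper handles the inhomogeneity $\eta t$ by the shift $\tilde v = v + \eta/L$, which is equivalent to your variation-of-constants or comparison-ODE route). Your remark about the $L=0$ edge case is a small bonus the paper does not address.
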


\begin{proof}
We first show a modified version of Gronwall's inequality. To be more specific, let $\xi_1$ and $\xi_2$ be positive constants and $v$ a continuous function on $0 \leq t < \infty$ such that
\begin{align}\label{prop_0_aux}
v(t) \leq \xi_2 t + \xi_1 \int_0^t v(s) ds
\end{align}
Then, it holds that $v(t) \leq \tfrac{\xi_2}{\xi_1} (\xe^{\xi_1 t} - 1)$. To see this, we first rewrite~(\ref{prop_0_aux}) to
\[
v(t) + \frac{\xi_2}{\xi_1} \leq \frac{\xi_2}{\xi_1} + \xi_1 \int_0^t \left(v(s)+\frac{\xi_2}{\xi_1} \right) ds
\]
Since this rewrites to $\tilde{v}(t) \leq \tilde{\alpha} + \int_0^t \tilde{v}(s) \tilde{w}(s) ds$ for $\tilde{v}(s) := v(s) + \tfrac{\xi_2}{\xi_1}$, $\tilde{\alpha} := \tfrac{\xi_2}{\xi_1}$ and $\tilde{w}(s) := \xi_1$, Gronwall's inequality ensures that $\tilde{v}(t) \leq \tilde{\alpha} \cdot \xe^{\int_0^t \tilde{w}(s) ds}$ and we infer the auxiliary statement. This, in turn, yields
\begin{align*}
\norm{x(t) - x_h(t)} & \leq \norm{x(0) - x_h(0)} \\
& \qquad + \norm{\int_0^t \Big( F(x(s)) - F_h(x_h(s)) \Big) ds} \\
& \leq \norm{\int_0^t \big(F(x(s)) - F(x_h(s))\big) ds} \\
& \qquad + \norm{\int_0^t \big(F(x_h(s)) - F_h(x_h(s))\big) ds} \\
& \leq L \int_0^t \norm{x(s) - x_h(s)} ds + \eta t \\
& \leq \frac{\eta}{L} ( e^{L t} - 1 )
\end{align*}
\end{proof}

\begin{proposition}\label{prop_1}
Let $E \partial_t{x} = A x + b$ be a regular linear DAE system and let $\calD \subseteq \RE^n$ denote the corresponding set of consistent initial conditions. Then, $\calD$ is an affine subspace of $\RE^n$ and $x + h F_h(x) \in \calD$ whenever $x \in \calD$.
\end{proposition}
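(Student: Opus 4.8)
The plan is to exploit the Weierstrass--Kronecker canonical form of the regular pencil $(E,A)$, which reduces the whole statement to an elementary fact about nilpotent matrices. Regularity of the DAE is equivalent to regularity of the pencil, i.e.\ $\det(\lambda E - A) \not\equiv 0$, so there exist invertible $P, Q \in \RE^{n \times n}$ and a nilpotent $N$ with
\[
PEQ = \begin{pmatrix} I & 0 \\ 0 & N \end{pmatrix}, \qquad PAQ = \begin{pmatrix} J & 0 \\ 0 & I \end{pmatrix}.
\]
Writing $x = Q(\tilde x_1, \tilde x_2)^T$ with $\tilde x_1 \in \RE^{n_1}$ and $Pb = (b_1, b_2)^T$, left-multiplying the DAE by $P$ decouples it into an ODE block $\dert \tilde x_1 = J\tilde x_1 + b_1$ and an algebraic block $N \dert \tilde x_2 = \tilde x_2 + b_2$.

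First I would show $\calD$ is affine. The ODE block constrains nothing, so consistency is governed entirely by the nilpotent block. Since $b$ is constant and $N^k = 0$ for some $k$, rewriting $\tilde x_2 = N\dert \tilde x_2 - b_2$ and substituting the relation into itself repeatedly telescopes (all derivatives of $b_2$ vanishing) to $\tilde x_2 = N^k \tilde x_2^{(k)} - b_2 = -b_2$. Thus a consistent initial value must satisfy $\tilde x_2(0) = -b_2$ while $\tilde x_1(0)$ is free, i.e.\ $Q^{-1}\calD = \RE^{n_1} \times \{-b_2\}$; pulling this back through the invertible map $Q$ exhibits $\calD$ as an affine subspace of $\RE^n$.

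Then I would verify that $x \mapsto x + hF_h(x)$ preserves $\calD$. The same change of basis block-diagonalizes $E - hA$, since $P(E - hA)Q = \mathrm{diag}(I - hJ, N - hI)$, and both blocks are invertible for small $h>0$: the first because $hJ$ is small, the second because $N - hI$ has every eigenvalue equal to $-h \neq 0$ by nilpotency. Carrying $F_h(x) = (E - hA)^{-1}(Ax + b)$ through the transformation then gives
\[
Q^{-1}\!\left(x + hF_h(x)\right) = \begin{pmatrix} \tilde x_1 + h(I - hJ)^{-1}(J\tilde x_1 + b_1) \\ \tilde x_2 + h(N - hI)^{-1}(\tilde x_2 + b_2) \end{pmatrix}.
\]
For $x \in \calD$ we have $\tilde x_2 = -b_2$, so the factor $\tilde x_2 + b_2$ vanishes and the second component collapses to $-b_2$. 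Hence $x + hF_h(x)$ again meets the defining constraint of $\calD$, which is the claim.

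The main obstacle is not the computation, which is routine once the canonical form is available, but justifying the use of that form: I must tie the paper's definition of regularity (uniqueness of solutions from consistent data) to regularity of the pencil $(E,A)$ that underlies the Weierstrass--Kronecker decomposition, and make sure it is precisely the nilpotent block that simultaneously produces the affine constraint and guarantees its $F_h$-invariance. This equivalence is classical DAE theory and can be cited from the Kunkel--Mehrmann reference already in use.
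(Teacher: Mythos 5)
Your proposal is correct and follows essentially the same route as the paper: both arguments rest on the Weierstrass canonical form $PEQ=\operatorname{diag}(I,N)$, $PAQ=\operatorname{diag}(J,I)$, identify $\calD$ with the affine set $\{\tilde x_2=-b_2\}$, and check that the backward-Euler step $x\mapsto x+hF_h(x)$ fixes the nilpotent component. The only differences are cosmetic: the paper detours through identifying $F_h$ with the BDF-1 scheme and a Neumann-series expansion of $(I-\tfrac{1}{h}N)^{-1}$ in the reduced case $E=N$, $A=I$, whereas you compute $F_h$ directly in canonical coordinates (where the second block collapses because $\tilde x_2+b_2=0$) and, unlike the paper, derive the affineness of $\calD$ rather than citing it.
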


\begin{proof}
To see that $\calD$ is an affine subspace of $\RE^n$, please refer to~\cite[Section 2.1]{KunkelMehrmann}. Note further that $x_i = x_{i-1} + h F_h(x_{i-1})$ defines the backward Euler scheme which is applied to the DAE system $E \partial_t{x} = A x + b$, see~\cite[Section 5.2]{KunkelMehrmann}. Consider the BDF-1 scheme~\cite[Section 5.3]{KunkelMehrmann} which is given by
\[
\tfrac{1}{h} E (x_i - x_{i-1}) = A x_i + b
\]
if applied to $E \partial_t{x} = A x + b$. With this, we first observe that
\begin{align*}
\phantom{\Leftrightarrow} \qquad & &  \tfrac{1}{h} E (x_i - x_{i-1}) & = A x_i + b \\
\Leftrightarrow \qquad & &  \tfrac{1}{h} E x_i - A x_i & = \tfrac{1}{h} E x_{i-1} + b \\
\Leftrightarrow \qquad & & (\tfrac{1}{h} E - A) x_i & = \tfrac{1}{h} E x_{i-1} + b \\
\Leftrightarrow \qquad & & (E - h A) x_i & = E x_{i-1} + h b \\
\Leftrightarrow \qquad & & x_i & = (E - h A)^{-1} (E x_{i-1} + h b) ,
\end{align*}
where the inversion in the last line can always be performed for sufficiently small $h$ because $E \partial_t{x} = A x + b$ is regular. This, in turn, yields
\begin{align*}
& \frac{x_i - x_{i-1}}{h} = (E - h A)^{-1} b + \big( (E - h A)^{-1} E - I \big) \tfrac{1}{h} x_{i-1} \\
& \quad = (E - h A)^{-1} b + (E - h A)^{-1} (E - (E - h A) ) \tfrac{1}{h} x_{i-1} \\
& \quad = (E - h A)^{-1} b + (E - h A)^{-1} A x_{i-1} \\
& \quad = (E - h A)^{-1} (A x_{i-1} + b) \\
& \quad = F_h(x_{i-1})
\end{align*}
This shows that the backward Euler scheme and the BDF-1 scheme are identical if applied to $E \partial_t{x} = A x + b$. With this, the statement of the proposition is closely related to~\cite[Remark 5.25]{KunkelMehrmann}. To see it, we may assume without loss of generality (see proof of~\cite[Theorem 5.24]{KunkelMehrmann}) that $E \partial_t{x} = A x + b$ is such that $A = I$ and $E = N$ for some nilpotent $N$ with $N^\nu = 0$ and $N^{\nu - 1} \neq 0$. It can be easily seen that in such a case the solution is $x \equiv -b$, thus implying in particular that the set of consistent initial conditions is $\calD = \{-b\}$. Moreover, the BDF-1 scheme rewrites to
\begin{align*}
\phantom{\Leftrightarrow} \qquad & & (\tfrac{1}{h} N - I) x_i & = \tfrac{1}{h} N x_{i-1} + b \\
\Leftrightarrow \qquad & & (I - \tfrac{1}{h} N) x_i & = -\tfrac{1}{h} N x_{i-1} - b \\
\Leftrightarrow \qquad & & x_i & = -(I - \tfrac{1}{h} N)^{-1} (\tfrac{1}{h} N x_{i-1} + b) \\
\Leftrightarrow \qquad & & x_i & = - \sum_{l = 0}^{\nu - 1} (\tfrac{1}{h} N)^l (\tfrac{1}{h} N x_{i-1} + b) ,
\end{align*}
where the last equivalence is due to the Neumann series and the nilpotency of $N$. This, in turn, implies that
\begin{align*}
x_i & = - \sum_{l = 0}^{\nu - 1} (\tfrac{1}{h} N)^l (\tfrac{1}{h} N x_{i-1} + b) \\
& = - \sum_{l = 1}^{\nu - 1} (\tfrac{1}{h} N)^l x_{i-1} - \sum_{l = 0}^{\nu - 1} (\tfrac{1}{h} N)^l b \\
& = - b - \sum_{l = 1}^{\nu - 1} (\tfrac{1}{h} N)^l (x_{i-1} + b) ,
\end{align*}
thus showing that $x_i = -b$ whenever $x_{i-1} = -b$.
\end{proof}

\begin{proposition}\label{prop_2}
Let $E \partial_t{x} = A x + b$ be a regular linear DAE system and let $\calD \subseteq \RE^n$ denote the corresponding set of consistent initial conditions. Then
\begin{itemize}
    \item The solution of $E \partial_t{x} = A x + b$ is contained in $\calD$.
    \item There exist $\hat{A} \in \RE^{n \times n}$ and $\hat{b} \in \RE^n$ such that the solution of the ODE system $\partial_t{x} = \hat{A} x + \hat{b}$ coincides with that of $E \partial_t{x} = A x + b$ for all $x(0) \in \calD$.
    \item Together with $F_h(x) := (E - hA)^{-1}(A x + b)$, where $h > 0$, it holds that $F_h$ converges uniformly, as $h \to 0$, to $\hat{A} x + \hat{b}$ on any bounded subset of $\calD$.
\end{itemize}
\end{proposition}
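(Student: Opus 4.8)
The plan is to reduce everything to the Weierstrass–Kronecker canonical form of the regular pencil $(E,A)$, the standard structural decomposition that is already implicitly invoked in the proof of Proposition~\ref{prop_1}. Concretely, I would fix invertible $P, Q \in \RE^{n \times n}$ with $PEQ = \mathrm{diag}(I, N)$ and $PAQ = \mathrm{diag}(J, I)$, where $N$ is nilpotent with $N^\nu = 0$ and $J$ is arbitrary. Writing $x = Q(y_1, y_2)^T$ and $Pb = (c_1, c_2)^T$, left-multiplication of $E \dert x = A x + b$ by $P$ decouples the DAE into a slow ODE block $\dert y_1 = J y_1 + c_1$ and a fast algebraic block $N \dert y_2 = y_2 + c_2$. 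Solving the latter by repeated differentiation and invoking $N^\nu = 0$ yields, for constant $b$, the constraint $y_2 \equiv -c_2$; hence $\calD = \{\, Q(y_1, -c_2)^T : y_1 \in \RE^{n_1} \,\}$, where $n_1$ is the size of the slow block. This is an affine subspace, consistent with Proposition~\ref{prop_1}.

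With this decomposition the first two claims are nearly immediate. Since every solution satisfies $y_2 \equiv -c_2$, it remains in $\calD$ for all time, giving the first bullet. For the second, I would set $\hA := Q\,\mathrm{diag}(J,0)\,Q^{-1}$ and $\hb := Q(c_1, 0)^T$, so that in transformed coordinates $\dert x = \hA x + \hb$ reads $\dert y_1 = J y_1 + c_1$ and $\dert y_2 = 0$. For any $x(0) \in \calD$ the component $y_2$ stays frozen at $-c_2$ while $y_1$ obeys exactly the slow block, so the ODE solution coincides with the DAE solution, establishing the second bullet.

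The third bullet is the delicate one, and it is where the restriction to $\calD$ becomes essential. Computing in the same coordinates gives $P(E - hA)Q = \mathrm{diag}(I - hJ,\, N - hI)$, so that
\[
F_h\big(Q(y_1,y_2)^T\big) = Q \begin{pmatrix} (I - hJ)^{-1}(J y_1 + c_1) \\ (N - hI)^{-1}(y_2 + c_2) \end{pmatrix}.
\]
The fast block is the obstacle: the Neumann series gives $(N - hI)^{-1} = -\sum_{l=0}^{\nu-1} h^{-l-1} N^l$, which blows up as $h \to 0$, so uniform convergence cannot possibly hold on all of $\RE^n$. The resolution is that on $\calD$ we have $y_2 + c_2 = 0$, so this divergent block is applied to the zero vector and vanishes identically for every $h$. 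What survives is the slow block, and since $(I - hJ)^{-1} \to I$ in operator norm, the difference $F_h(x) - (\hA x + \hb)$ equals $Q[(I - hJ)^{-1} - I](J y_1 + c_1)$ in the first component and $0$ in the second. Its norm is bounded by $\norm{Q}\,\norm{(I - hJ)^{-1} - I}\,\norm{J y_1 + c_1}$, which tends to $0$ uniformly on bounded subsets of $\calD$ because $\norm{J y_1 + c_1}$ is bounded there.

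I expect the main obstacle to be precisely this divergence of the nilpotent inverse: the genuine content of the third claim is recognizing that consistency of the initial data is exactly the condition that annihilates the divergent component, whereas the slow part converges by elementary continuity of matrix inversion. The remaining ingredients—the existence of the canonical form, the bookkeeping of the coordinate change, and the final operator-norm estimate—are routine.
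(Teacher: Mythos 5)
Your proof is correct, and in its essentials it follows the same route as the paper: both arguments pass to the Weierstrass canonical form $\mathrm{diag}(I,N)$, $\mathrm{diag}(J,I)$ of the regular pencil, split the dynamics into a slow ODE block and a fast nilpotent block, and locate the entire content of the third bullet in the observation that consistency ($y_2=-c_2$) annihilates the argument of the divergent factor $(N-hI)^{-1}$, so that only the slow block survives. The one place you genuinely depart from the paper is the slow-block estimate. The paper identifies $F_h(x_0)$ with the backward-Euler increment $\tfrac{1}{h}(x_1-x_0)$ and then bounds the local truncation error against the true solution, Taylor-expanding $x^{\rI}(h)$ with a second-derivative remainder and using the Neumann series to control $\norm{I-(I-hJ)^{-1}}$; you instead compare $F_h$ directly to the limiting vector field via $F_h(x)-(\hA x+\hb)=Q\bigl[(I-hJ)^{-1}-I\bigr](Jy_1+c_1)$ and the same resolvent bound. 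Your version is shorter and proves exactly the stated uniform convergence without any detour through the numerical scheme or the true solution; the paper's version additionally yields the quantitative bound $\norm{x^{\rI}(h)-x^{\rI}_1}\leq h^2(\zeta_1+\zeta_2\norm{x^{\rI}_0})$, i.e.\ the classical $O(h^2)$ consistency of backward Euler, which is the form in which the estimate is usually stated. A further, harmless difference is that you construct $\hA=Q\,\mathrm{diag}(J,0)\,Q^{-1}$ and $\hb=Q(c_1,0)^T$ explicitly and verify the first two bullets from the canonical form, where the paper simply cites them as standard facts about linear DAEs.
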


\begin{proof}
The first two points are well-known in the theory of linear DAE systems, see Section~\cite[Section 2.1]{KunkelMehrmann} (it is interesting to note that an efficient computation of $\hat{A} \in \RE^{n \times n}$ and $\hat{b} \in \RE^n$ is difficult because it relies on index reduction~\cite{doi:10.1137/0909014}).

To see third claim, we observe that $x_i = x_{i-1} + h F_h(x_{i-1})$ defines the backward Euler scheme applied to the DAE system $E \partial_t{x} = A x + b$, see~\cite[Section 5.2]{KunkelMehrmann}. We next show that $x_0 \mapsto \frac{1}{h}(x_1 - x_0)$ converges uniformly on any bounded subset of $\calD$ to $x_0 \mapsto \hat{A} x_0 + \hat{b}$ when $h \to 0$. To this end, we may assume without loss of generality (see discussion after Equation 5.25 in~\cite{KunkelMehrmann}) that the DAE system $E \partial_t{x} = A x + b$ is such that
\[
E = \left(
\begin{array}{c|c}
I & 0 \\ \hline
0 & N
\end{array}
\right) \text{ and }
A = \left(
\begin{array}{c|c}
J & 0 \\ \hline
0 & I
\end{array}
\right) ,
\]
where $N$ is such that $N^\nu = 0$ and $N^{\nu-1} \neq 0$ for some $\nu \geq 1$. This implies that the solution of $E \partial_t{x} = A x + b$ is characterized by a pair of \emph{decoupled} dynamical systems, namely by the ODE system $\partial_t{x}^\rI = J x^\rI + b^\rI$ and the DAE system $N \partial_t{x}^\rII = x^\rII + b^\rII$, where $x = (x^\rI, x^\rII)$ and $b =(b^\rI, b^\rII)$. Thanks to this, it suffices to consider $x^\rI_1 - x^\rI_0$ and $x^\rII_1 - x^\rII_0$ separately.

Since $x_\rII \equiv - b_\rII$ solves $N \partial_t{x}^\rII = x^\rII + b^\rII$, we infer that $\calD = \{ (x^\rI, x^\rII) \mid x^\rII = - b^\rII \}$. Hence, Proposition~\ref{prop_1} shows that $x^\rII_1 - x^\rII_0 = 0$ whenever $x_0 \in \calD$.

We next focus on $x^\rI_1 - x^\rI_0$. Thanks to the fact that $\partial_t{x}^\rI = J x^\rI + b^\rI$, we have to investigate the local truncation error of the backward Euler scheme in the context of a linear ODE system. Despite the fact this is discussed in many books about ODEs, we provide here a proof because most texts do not show that the local truncation error converges \emph{uniformly} to zero on arbitrarily large compact sets. To this end, we first observe that the Taylor expansion of $x^\rI$ around zero yields
\begin{align*}
x^\rI(h) = x^\rI_0 + (J x^\rI_0 + b^\rI) h + \ddot{x}^\rI(\xi) \tfrac{h^2}{2}
\end{align*}
for some $\xi \in (0;h)$. With $\tilde{F}_h(x^\rI_0) = (I - hJ)^{-1}(J x^\rI_0 + b^\rI)$, the proof of Proposition~\ref{prop_1} implies that $\tilde{F}_h(x^\rI_0) = \frac{1}{h}(x^\rI_1 - x^\rI_0)$. This, in turn, implies that
\begin{align*}
x^\rI(h) - x^\rI_1 & = x^\rI(h) - (x^\rI_0 + h \tilde{F}_h(x^\rI_0)) \\
& = x^\rI_0 + (J x^\rI_0 + b^\rI) h + \ddot{x}^\rI(\xi) \tfrac{h^2}{2} \\
& \quad - [x^\rI_0 + h(I - h J)^{-1}(J^\rI x_0 + b^\rI)] \\
& = h^2 [\tfrac{1}{2}\ddot{x}^\rI(\xi) + \tfrac{1}{h} (I - (I - h J)^{-1})(J x^\rI_0 + b^\rI)]
\end{align*}
In the case $h \leq 1 / (2 \norm{J})$, the Neumann series allows us to deduce that
\begin{align*}
I - (I - h J)^{-1} & = (I - h J)(I - h J)^{-1} - (I - h J)^{-1} \\
& = ((I - h J) - I)(I - h J)^{-1} \\
& = - h J (I - h J)^{-1} \\
& = - h J \sum_{k=0}^\infty (h J)^k
\end{align*}
with $\norm{\sum_{k=0}^\infty (h J)^k} \leq \sum_{k = 0}^\infty 2^{-k} = 2$. Moreover, a differentiation of $\partial_t{x}^\rI = J x^\rI + b^\rI$ yields $\ddot{x}^\rI = J^2 x^\rI + J b^\rI$. This and the last statement imply the existence of constants $\zeta_1, \zeta_2 \geq 0$ that neither depend on $x^\rI_0$ nor on $h$ and that satisfy
\begin{align*}
\norm{x^\rI(h) - x^\rI_1} & \leq h^2 \big(\zeta_1 + \zeta_2 \norm{x^\rI_0}\big)
\end{align*}
for all $0 \leq h \leq 1$. This shows that $x^\rI_0 \mapsto \frac{1}{h}(x^\rI_1 - x^\rI_0)$ converges uniformly on any bounded set to $x^\rI_0 \mapsto J x^\rI_0 + b^\rI$.
\end{proof}

We are in a position to prove Theorem~\ref{thm_main}.

\begin{proof}[Proof of Theorem~\ref{thm_main}]
Let $\hat{A} \in \RE^{n \times n}$ and $\hat{b} \in \RE^n$ be as in Proposition~\ref{prop_2} and fix $T > 0$ and $x(0) \in \calD$. Since the solution of $E \partial_t{x} = A x + b$ solves the linear ODE system $\partial_t{x} = \hat{A} x + \hat{b}$, this implies that $x$ exists and is bounded on $[0;T]$. Hence, there exists a closed ball $B_\rho(0)$ centered at $0 \in \RE^n$ with radius $\rho > 0$ such that $x(t) \in B_\rho(0)$ for all $0 \leq t \leq T$. Since $B_\rho(0)$ is bounded, Proposition~\ref{prop_2} ensures that $x$ is contained in $\calD$ and that for any $\eta > 0$ there exists an $h > 0$ such that
\[
\sup_{x \in B_\rho(0) \cap \calD} \norm{ \hat{A} x + \hat{b} - F_h(x) } \leq \eta
\]
Moreover, Proposition~\ref{prop_1} ensures that the solution $x_h$ of $\partial_t{x}_h = F(x_h)$ is contained in $\calD$. By combining the foregoing statements, Proposition~\ref{prop_0} yields the claim.
\end{proof}

The following auxiliary results are needed for the proof of Theorem~\ref{thm_main_ext}

\begin{proposition}\label{prop_extended_case}
Fix $E, A \in \RE^{n \times n}$, $B \in \RE^{n \times (k + m)}$, $D \in \RE^{(k + m) \times (k + m)}$, $d \in \RE^{(k + m)}$ and consider the linear DAE system
\[
\underbrace{
\left(
\begin{array}{c|c}
E & 0 \\ \hline
0 & I
\end{array}
\right)}_{\hE :=}
\underbrace{
\left(
\begin{array}{c}
\partial_t{x} \\ \hline
\partial_t{u}
\end{array}
\right)}_{ \partial_t{\hx} :=}
=
\underbrace{
\left(
\begin{array}{c|c}
A & B \\ \hline
0 & D
\end{array}
\right)}_{\hA :=}
\underbrace{
\left(
\begin{array}{c}
x \\ \hline
u
\end{array}
\right)}_{\hx :=}
+
\underbrace{
\left(
\begin{array}{c}
0 \\ \hline
d
\end{array}
\right)}_{\hb :=}
\]
Then, $(\hE - h \hA)^{-1} (\hA \hx + \hb)$ is given by
\begin{align}\label{eq_inv_expression}
\left(
\begin{array}{c}
(E - hA)^{-1} \big( Ax + B u + h B (I - hD)^{-1} (D u + d) \big) \\
(I - hD)^{-1} (D u + d)
\end{array}
\right)
\end{align}
\end{proposition}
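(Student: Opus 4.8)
The plan is to exploit the block upper-triangular structure that appears as soon as we form $\hE - h\hA$. First I would write
\[
\hE - h\hA = \begin{pmatrix} E - hA & -hB \\ 0 & I - hD \end{pmatrix},
\]
which is block upper triangular with diagonal blocks $E - hA$ and $I - hD$. Both diagonal blocks are invertible for sufficiently small $h$: the block $E - hA$ is invertible precisely because the DAE system is regular, which is exactly the well-definedness of $F_h$ noted earlier, while $I - hD$ is strictly diagonally dominant and hence invertible for small $h$, as remarked after Theorem~\ref{thm_main_ext}. Consequently $\hE - h\hA$ is invertible, and its inverse is again block upper triangular, given by the standard formula
\[
(\hE - h\hA)^{-1} = \begin{pmatrix} (E-hA)^{-1} & h\,(E-hA)^{-1} B\,(I-hD)^{-1} \\ 0 & (I-hD)^{-1} \end{pmatrix},
\]
where the off-diagonal block is $-P^{-1}QS^{-1}$ with $P = E-hA$, $Q = -hB$, $S = I - hD$, the two minus signs cancelling to produce the leading factor $+h$.

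Next I would compute $\hA\hx + \hb = (Ax + Bu,\; Du + d)^T$ directly from the block definitions, and left-multiply by the inverse above. Reading off the two block rows gives the top entry
\[
(E-hA)^{-1}(Ax+Bu) + h\,(E-hA)^{-1} B\,(I-hD)^{-1}(Du+d),
\]
which factors as $(E-hA)^{-1}\bigl( Ax + Bu + hB(I-hD)^{-1}(Du+d) \bigr)$, and the bottom entry $(I-hD)^{-1}(Du+d)$. These coincide with the two components of~(\ref{eq_inv_expression}), completing the argument.

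The computation is entirely mechanical, so I do not expect a substantive obstacle here; the statement is essentially a bookkeeping lemma whose role is to feed into the proof of Theorem~\ref{thm_main_ext}. The only points requiring genuine care are the justification of invertibility of the two diagonal blocks, which I would attribute to regularity of the DAE and to diagonal dominance respectively, and the sign tracking in the off-diagonal block of the triangular inverse, where the factor $-hB$ combines with the leading minus of the inverse formula to yield the $+h$ appearing in~(\ref{eq_inv_expression}).
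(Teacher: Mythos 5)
Your proposal is correct and follows essentially the same route as the paper: both rest on the block upper-triangular inversion formula for $\hE - h\hA$, the only (immaterial) difference being that the paper first forms the matrix $(\hE - h\hA)^{-1}\hA$ and treats $(\hE - h\hA)^{-1}\hb$ separately before summing, whereas you first form the vector $\hA\hx + \hb$ and then apply the inverse. Your sign bookkeeping in the off-diagonal block and the invertibility justifications match the paper's reasoning.
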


\begin{proof}
By relying on the inversion formula for block matrices, we obtain
\begin{align*}
& (\hE - h \hA)^{-1} \hA \\
& = \left(
\begin{array}{c|c}
E - hA & -hB \\ \hline
0 & I - hD
\end{array}
\right)^{-1}
\left(
\begin{array}{c|c}
A & B \\ \hline
0 & D
\end{array}
\right) \\
& = \left(
\begin{array}{c|c}
(E - hA)^{-1} & (E - hA)^{-1} hB (I - hD)^{-1} \\ \hline
0 & (I - hD)^{-1}
\end{array}
\right)
\left(
\begin{array}{c|c}
A & B \\ \hline
0 & D
\end{array}
\right) \\
& = \left(
\begin{array}{c|c}
(E - hA)^{-1} A & (E - hA)^{-1} ( B + h B (I - hD)^{-1} D ) \\ \hline
0 & (I - hD)^{-1} D
\end{array}
\right)
%\\
%& = \left(
%\begin{array}{c|c}
%(E - hA)^{-1} A & (E - hA)^{-1} B (I - hC)^{-1} \\ \hline
%0 & (I - hC)^{-1} C
%\end{array}
%\right) ,
\end{align*}
%where the last identity follows from the Neumann series $(I - hC)^{-1} = \sum_{l = 0}^\infty h^l C^l$ that yields
%\begin{align*}
%B + B (I - hC)^{-1} h C & = B + B \big[ \sum_{l = 0}^\infty h^l C^l \big] h C \\
%& = B + B( (I - hC)^{-1} - I) \\
%& = B (I - hC)^{-1}
%\end{align*}
Armed with this, we infer that
\begin{align*}
& (\hE - h \hA)^{-1} \hA
\left(
\begin{array}{c}
x \\ \hline
u
\end{array}
\right) \\
& =
\left(
\begin{array}{c}
(E - hA)^{-1} \big( Ax + B u + h B (I - hD)^{-1} D u \big) \\
(I - hD)^{-1} D u
\end{array}
\right)
\end{align*}
and
\begin{align*}
& (\hE - h \hA)^{-1}
\left(
\begin{array}{c}
0 \\ \hline
d
\end{array}
\right)
& =
\left(
\begin{array}{c}
(E - hA)^{-1} h B (I - hD)^{-1} d \\
(I - hD)^{-1} d
\end{array}
\right)
\end{align*}
A summation of the foregoing statements yields~(\ref{eq_inv_expression}).
\end{proof}

\begin{corollary}\label{thm_extended_case_2}
Fix an arbitrary consistent initial condition $(x(0),u(0))^T \in \RE^{n + k + m}$ of the DAE system from Proposition~\ref{prop_extended_case}. The corresponding ODE approximation is then
\begin{align*}
\partial_t{x}_h & = (E - hA)^{-1} \big( Ax_h + B u_h^{\langle 0 \rangle} + h B u_h^{\langle 1 \rangle} \big) \\
\partial_t{u}_h^{\langle 0 \rangle} & = (I - hD)^{-1} D u_h^{\langle 0 \rangle} + (I - hD)^{-1} d \\
\partial_t{u}_h^{\langle 1 \rangle} & = (I - hD)^{-1} D u_h^{\langle 1 \rangle}
\end{align*}
with $u^{\langle 0 \rangle}_h(0) = u(0)$ and $u^{\langle 1 \rangle}_h(0) = (I - hD)^{-1} D u(0)$.
\end{corollary}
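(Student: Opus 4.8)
The plan is to obtain the three-block system by specializing the backward-Euler ODE approximation $\dert x_h = F_h(x_h)$ underlying Theorem~\ref{thm_main} to the augmented DAE $\hE \dert \hx = \hA \hx + \hb$ of Proposition~\ref{prop_extended_case} and then relabeling the resulting signals. First I would note that this augmented system is regular for small $h$: its lower block is the explicit ODE $\dert u = D u + d$, while its upper block inherits regularity from the original DAE, so the associated ODE approximation is by definition $\dert \hx_h = (\hE - h\hA)^{-1}(\hA \hx_h + \hb)$ with $\hx_h(0) = (x(0), u(0))^T$. Substituting the closed form~(\ref{eq_inv_expression}) already computed in Proposition~\ref{prop_extended_case}, the lower ($u$) block reads $\dert u_h = (I - hD)^{-1}(D u_h + d)$; renaming $u_h^{\langle 0 \rangle} := u_h$ and expanding $(I-hD)^{-1}(D u_h^{\langle 0\rangle} + d) = (I-hD)^{-1}D\, u_h^{\langle 0\rangle} + (I-hD)^{-1} d$ reproduces the second stated equation together with $u_h^{\langle 0\rangle}(0) = u(0)$.

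The conceptual step is the upper ($x$) block, whose input-dependent part is $B u_h + h B (I-hD)^{-1}(D u_h + d)$. The idea is to recognize the $h$-order term as the time derivative of the zeroth-order signal: setting $u_h^{\langle 1\rangle} := (I-hD)^{-1}(D u_h^{\langle 0\rangle} + d) = \dert u_h^{\langle 0\rangle}$ yields the first equation verbatim, and differentiating this identity while using the $u_h^{\langle 0\rangle}$-equation gives $\dert u_h^{\langle 1\rangle} = (I-hD)^{-1}D\, \dert u_h^{\langle 0\rangle} = (I-hD)^{-1}D\, u_h^{\langle 1\rangle}$, the homogeneous third equation. This is precisely the property that later makes $u_h^{\langle 1\rangle}$ realizable as a CRN, being itself the solution of a linear ODE.

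The step I expect to be the main obstacle is reconciling the initial condition of $u_h^{\langle 1\rangle}$. The definition $u_h^{\langle 1\rangle} = \dert u_h^{\langle 0\rangle}$ forces $u_h^{\langle 1\rangle}(0) = (I-hD)^{-1}(D u(0) + d)$, whereas the statement records the homogeneous value $(I-hD)^{-1}D u(0)$, the two differing by $(I-hD)^{-1} d$. Since $u_h^{\langle 1\rangle}$ enters the $x$-dynamics only through the $h$-weighted term $h B u_h^{\langle 1\rangle}$, passing from the exact datum to the homogeneous one perturbs the driving term by $(E-hA)^{-1} h B \rho$, where $\rho$ solves the same homogeneous ODE with $\rho(0) = (I-hD)^{-1}d$. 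The delicate point is that $(E-hA)^{-1}$ may degenerate like $h^{-\nu}$ for an index-$\nu$ DAE, so the factor $h$ alone need not kill this perturbation; I would control it by restricting to the consistent manifold $\calD$, on which Proposition~\ref{prop_2} keeps the relevant $(E-hA)^{-1}$-combinations bounded, and then invoke the Gronwall estimate of Proposition~\ref{prop_0} to conclude that the homogeneous initialization leaves the $h\to 0$ limit of $x_h$ unchanged. The remaining content of the corollary is then the block inversion supplied by Proposition~\ref{prop_extended_case} together with the bookkeeping identification of the two input signals.
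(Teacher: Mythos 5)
Your main mechanism is exactly the paper's (whose proof is the single line ``follows directly from Proposition~\ref{prop_extended_case}''): read off the two blocks of $(\hE-h\hA)^{-1}(\hA\hx+\hb)$, set $u_h^{\langle 0\rangle}:=u_h$ and $u_h^{\langle 1\rangle}:=(I-hD)^{-1}(Du_h^{\langle 0\rangle}+d)=\dert u_h^{\langle 0\rangle}$, and differentiate the latter identity to obtain the homogeneous third equation. That part is correct and is the whole intended content of the corollary.

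Where you go beyond the paper is in flagging the initial condition of $u_h^{\langle 1\rangle}$, and you are right that there is a discrepancy: the identification $u_h^{\langle 1\rangle}=\dert u_h^{\langle 0\rangle}$ forces $u_h^{\langle 1\rangle}(0)=(I-hD)^{-1}(Du(0)+d)$, while the statement (and likewise Theorem~\ref{thm_main_ext}) records $(I-hD)^{-1}Du(0)$; the two agree only when $d=0$ (as in the paper's sinusoidal example). However, your proposed repair is the wrong one. The corollary is a purely algebraic restatement of the backward-Euler vector field, not a convergence claim, so the clean fix is simply to correct the initial datum to $(I-hD)^{-1}(Du(0)+d)$, after which the three-block system is \emph{exactly} equivalent to the two-block system and no limiting argument is needed. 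Your alternative --- keeping the homogeneous datum and absorbing the defect $\rho$ with $\rho(0)=(I-hD)^{-1}d$ via Propositions~\ref{prop_0} and~\ref{prop_2} --- does not go through as sketched: the perturbation enters the $x$-dynamics as $h(E-hA)^{-1}B\rho$, and in Weierstrass form $h(N-hI)^{-1}=-\sum_{l=0}^{\nu-1}h^{-l}N^{l}$, so this term stays $O(1)$ (or worse) for index $\nu\geq 2$ unless $B$ happens to be annihilated by the nilpotent part; Proposition~\ref{prop_2} controls $(E-hA)^{-1}(Ax+b)$ on the consistent manifold, not $(E-hA)^{-1}B\rho$ for an arbitrary exogenous $\rho$. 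So treat the stated initial condition as a typo rather than something to be argued around.
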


\begin{proof}
Follows directly from Proposition~\ref{prop_extended_case}.
\end{proof}

\begin{proof}[Proof of Theorem~\ref{thm_main_ext}]
In Theorem~\ref{thm_main_ext}, replace $u$ with $\hat{u}$, $z$ with $\hat{z}$ and $B$ with $\hat{B}$. Afterwards, apply Corollary~\ref{thm_extended_case_2} to the case where $u := \binom{\hat{u}}{\hat{z}} \in \RE^{k + m}$ and
\[
B := \left(
\begin{array}{c|c}
\hat{B} & 0 \\ \hline
0 & 0
\end{array}
\right) \in \RE^{(k + m) \times (k + m)}
\]
\end{proof}

\end{document}